\documentclass[12pt]{article}

\usepackage{geometry} 

\usepackage{bbm}
\usepackage{amsmath}
\usepackage{amsthm}
\usepackage{amssymb}
\usepackage{mathtools}
\usepackage[utf8]{inputenc}
\usepackage{natbib}
\RequirePackage[colorlinks,citecolor=blue,urlcolor=blue]{hyperref}
\usepackage{wrapfig}
\usepackage{graphicx}
\usepackage{caption}
\usepackage{subcaption}
\graphicspath{ {./graphs/} }
\usepackage{multirow}

\theoremstyle{plain}
\newtheorem{theorem}{Theorem}

\newtheorem{lemma}[theorem]{Lemma}
\newtheorem{assumption}[theorem]{Assumption}

\newtheorem{remark}[theorem]{Remark}

\usepackage{algorithm}
\usepackage{algpseudocode}
\algnewcommand{\IIf}[1]{\State\algorithmicif\ #1\ \algorithmicthen}
\algnewcommand{\EndIIf}{\unskip\ \algorithmicend\ \algorithmicif}
\algnewcommand\algorithmicinitialize{\textbf{Initialize:}}
\algnewcommand\algorithmicstepa{\textbf{Step 1:}}
\algnewcommand\algorithmicstepb{\textbf{Step 2:}}
\algnewcommand\algorithmicstepc{\textbf{Step 3:}}
\algnewcommand\algorithmicstepd{\textbf{Step 4:}}
\algnewcommand\algorithmicstepe{\textbf{Step 5:}}
\algnewcommand\algorithmicstepf{\textbf{Step 6:}}
\algnewcommand\algorithmicoutput{\textbf{Output:}}
\algnewcommand\Initialize{\item[\algorithmicinitialize]}
\algnewcommand\Stepa{\item[\algorithmicstepa]}
\algnewcommand\Stepb{\item[\algorithmicstepb]}
\algnewcommand\Stepc{\item[\algorithmicstepc]}
\algnewcommand\Stepd{\item[\algorithmicstepd]}
\algnewcommand\Stepe{\item[\algorithmicstepe]}
\algnewcommand\Stepf{\item[\algorithmicstepf]}
\algnewcommand\Output{\item[\algorithmicoutput]}

\usepackage{graphicx, color}
\graphicspath{{fig/}}

\usepackage{algorithm, algpseudocode} 
\usepackage{mathrsfs} 
\DeclareMathOperator*{\argmax}{arg\,max}

\newcommand{\blind}{0}

\addtolength{\oddsidemargin}{-.5in}%
\addtolength{\evensidemargin}{-1in}%
\addtolength{\textwidth}{1in}%
\addtolength{\textheight}{1.7in}%
\addtolength{\topmargin}{-1in}%

\begin{document}

\def\spacingset#1{\renewcommand{\baselinestretch}%
{#1}\small\normalsize} \spacingset{1}


\if0\blind
{
  \title{\bf MaxTDA: Robust Statistical Inference for Maximal Persistence in Topological Data Analysis}
  \author{Sixtus Dakurah\thanks{
    The authors gratefully acknowledge support from NSF grant numbers 2038556 and 2337243.  Support for this research was also provided by the University of Wisconsin-Madison Office of the Vice Chancellor for Research and Graduate Education with funding from the Wisconsin Alumni Research Foundation.}\hspace{.2cm}\\
    Department of Statistics, University of Wisconsin-Madison\\
    and \\
    Jessi Cisewski-Kehe \\
    Department of Statistics, University of Wisconsin-Madison}
  \maketitle
} \fi

\if1\blind
{
  \bigskip
  \bigskip
  \bigskip
  \begin{center}
    {\LARGE\bf MaxTDA: Robust Statistical Inference for Maximal Persistence in Topological Data Analysis}
\end{center}
  \medskip
} \fi

\bigskip
\begin{abstract}
Persistent homology is an area within topological data analysis (TDA) that can uncover different dimensional holes (connected components, loops, voids, etc.) in data.  The holes are characterized, in part, by how long they persist across different scales.  Noisy data can result in many additional holes that are not true topological signal.  Various robust TDA techniques have been proposed to reduce the number of noisy holes, however, these robust methods have a tendency to also reduce the topological signal.
This work introduces Maximal TDA (MaxTDA), a statistical framework addressing a limitation in TDA wherein robust inference techniques systematically underestimate the persistence of significant homological features. MaxTDA combines kernel density estimation with level-set thresholding via rejection sampling to generate consistent estimators for the maximal persistence features that minimizes bias while maintaining robustness to noise and outliers.  We establish the consistency of the sampling procedure and the stability of the maximal persistence estimator. The framework also enables statistical inference on topological features through rejection bands, constructed from quantiles that bound the estimator's deviation probability. MaxTDA is particularly valuable in applications where precise quantification of statistically significant topological features is essential for revealing underlying structural properties in complex datasets. Numerical simulations across varied datasets, including an example from exoplanet astronomy, highlight the effectiveness of MaxTDA in recovering true topological signals.
\end{abstract}

\noindent%
{\it Keywords:}  Kernel Density Estimation, Level-Set Estimation, Persistent Homology, Robust Inference 
\vfill

\newpage
\spacingset{1.75} 

\section{Introduction}
\label{sec:intro}

To analyze the underlying structure of complex datasets, topological data analysis (TDA) utilizes tools from algebraic topology to study the shape and connectivity of data across multiple scales. Central to TDA is persistent homology, which analyzes data through a filtration (i.e., a sequence of nested topological spaces) derived from the data, and computes homological invariants across different scales \citep{edelsbrunner2000topological,edelsbrunner2022computational} (see Section \ref{sec:prelim-topology} for more details). In this work, we refer to these invariants, which include connected components, loops, and other higher-dimensional holes, as features. By tracking when these features appear (their birth) and disappear (their death) as the filtration parameter changes, persistent homology identifies the features that persist over a range of scales. These features are represented in persistence diagrams as points with coordinates corresponding to their birth and death times where features that persists at larger scales may correspond to true signal, while the lower persistence features may be attributed to noise \citep{fasy2014confidence}. This process has broad applications. For example, in material science, it reveals significant voids that inform properties like permeability and strength \citep{robins2011theory}; in signal processing, it uncovers persistent circular features from time-delay embeddings that highlight underlying periodic dynamics \citep{perea2015sw1pers,dakurah2024subsequence}; and in astronomy, it distinguishes important cosmic structures such as clusters, filaments, and voids from noise, aiding in constraining the cosmological model \citep{pranav2017topology,xu2019finding}.

Identifying statistically significant features, particularly, the most persistent, or maximal persistent ones is challenging because persistence diagrams lack a canonical vector space structure, meaning operations like addition, averaging, and other conventional statistical techniques are not naturally defined. This difficulty is further compounded by noisy data. Methods such as kernel smoothing, developed within robust topological analysis \citep{fasy2018robust,anai2020dtm}, are employed to mitigate noise but also often reduce the lifetimes (i.e., persistences) of the maximal persistent features. The systematic underestimation of the lifetimes of these features is an artifact of the smoothing mechanisms typically employed in these robust methods. To enable statistical inference for maximal persistent features, it is helpful to address these limitations. This inference challenge arises from the need to quantify uncertainty in the presence of perturbations, such as noise, outliers, or density variation in a random sample $\mathbb{X}_n = \{\mathbf{x}_1, \cdots, \mathbf{x}_n\}$ drawn from a probability distribution $\mathbb{P}$ with compact support $\mathbb{X}$ in a space $\mathcal{X} \subset \mathbb{R}^d$. Robust topological tools aim to recover the topology of $\mathbb{X}$ by defining a smoothing function $\phi: \mathcal{X} \to \mathbb{R}$. This function, commonly a kernel density estimate (KDE), kernel distance, or distance-to-a-measure (DTM) function, is parameterized to suppress noise or reweight outliers \citep{chazal2011geometric, fasy2014confidence,fasy2018robust,anai2020dtm}. A preferred outcome would maintain high persistence for true features while reducing noise features to negligible persistence levels.

The motivation for this work is to develop an inference method that builds on these robust methods, while mitigating the reduction in the persistence of the features, in order to enhance a feature's statistical significance. The proposed framework, ``Maximal TDA'' (MaxTDA), mitigates this reduction by first estimating a KDE over the sample as an intermediate representation of the data sampling distribution. Then an upper-level set is defined for a carefully selected density threshold, and rejection sampling is used to draw samples from the thresholded KDE for subsequent statistical inference on the maximal persistent features. This process retains the robustness of the initial smoothing while producing a denser, more consistent sampling surface. Subsequent inference then involves further smoothing or directly computing a persistence diagram directly over this dense sample. This methodology is motivated by two key observations. First, the kernel smoothing enhances robustness against outliers and noise \citep{bobrowski2017topological,fasy2018robust,anai2020dtm}. Second, the thresholded KDE corrects for density variation in the sampling by providing for a denser and statistically consistent sampling surface \citep{tsybakov1997nonparametric,singh2009adaptive}, a characteristic that is crucial for maintaining the persistence of the features.
\begin{figure}[ht!]
    \includegraphics[width=\textwidth]{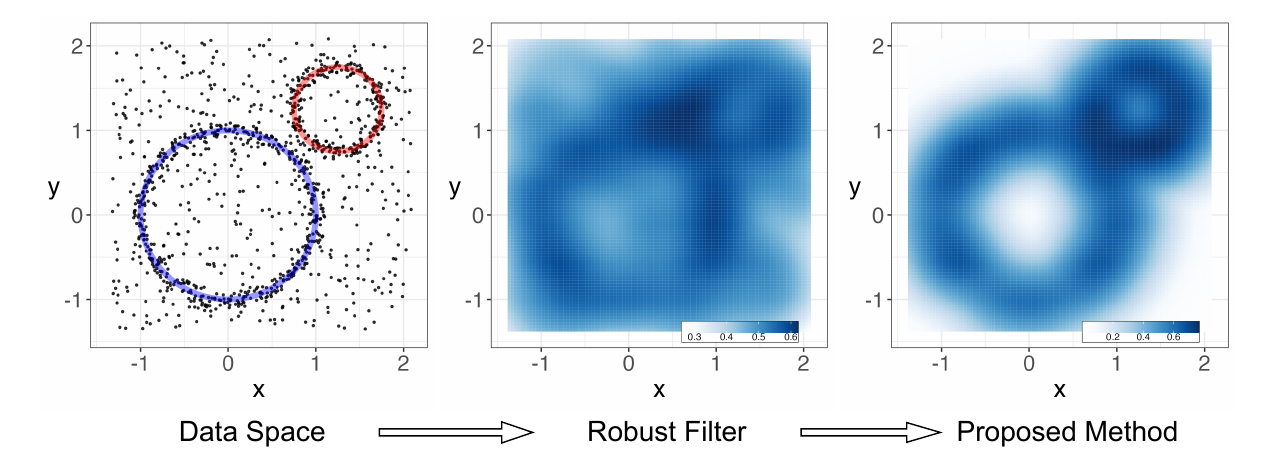}
    \caption{Illustration of the MaxTDA framework. For a data space (left), robust TDA methods applies a robust filter(e.g., KDE) to the data (middle). MaxTDA extends this by sampling from a thresholded KDE (right), enhancing robustness to noise and creating a denser sampling surface.}
    \label{fig:ga}
\end{figure}
This is illustrated in Figure \ref{fig:ga}, where the aim is to recover and maintain the persistence of key features such as the two loops (the red and blue circles) indicated by dense clusters.

The proposed MaxTDA approach presents a robust, consistent, and less biased estimator of the most persistent features in certain homology groups, which are groups that identify different dimensional holes in data; more details are presented in Section~\ref{sec:prelim-topology}. While KDEs have been used for robust persistent homology, we show that the resulting homology estimates do not preserve the strength of the true features. In Theorem~\ref{thm:consistency-subsamples}, we show that the proposed sampling technique is consistent, and in Lemma~\ref{lemma:maximal-persistence-stability}, we prove the stability of the resulting maximal persistence estimator.  From a statistical perspective, we establish that MaxTDA produces estimates with reduced bias and enhanced statistical significance. 
The remainder of this paper is structured as follows: Section~\ref{sec:prelim-topology} provides background on persistent homology, and kernel density and level-set estimation. Section~\ref{sec:methods} discusses theoretical results including consistency and bias analyses, and the statistical significance of the maximal persistence estimator. Section~\ref{sec:numerical-simulation} and ~\ref{sec:application} demonstrate the effectiveness of the MaxTDA through numerical simulations, including one motivated by a statistical challenge in exoplanet astronomy. Section~\ref{sec:conclusion} closes with implications and potential extensions of our work.

\section{Background}
\label{sec:prelim-topology}
A main constructs in TDA for studying shape features of data is persistent homology \citep{edelsbrunner2000topological,edelsbrunner2008persistent}. It provides a multi-scale characterization of the homological features of a topological space by tracking of when these features first appear (birth) and when they disappear (death) in a filtration. This section provides a review of some building blocks of persistent homology.

\paragraph{Definitions and Notations.} 
Suppose we observe a random sample $\mathbb{X}_n = \{\mathbf{x}_1, \cdots, \mathbf{x}_n \}$, drawn according to the probability distribution $\mathbb{P}$ having density function $f$ defined on the d-dimensional  compact set $\mathcal{X}$. Assume the distribution $\mathbb{P}$ is supported on the set $\text{supp}(\mathbb{P}) = \mathbb{X} \subset \mathcal{X}$. Define the Euclidean distance of any point to the set $\mathbb{A}\subset \mathcal{X}$ as:
\begin{equation}
    \text{d}_{\mathbb{A}}(\mathbf{x}) = \inf_{\mathbf{y} \in \mathbb{A}}|| \mathbf{y} - \mathbf{x} ||_2.
    \label{eqn:euclidean-distance}
\end{equation}
Let $\phi$ be any real-valued function, where $\phi: \mathcal{X} \xrightarrow{} \mathbb{R}$. We define the {\em lower-level sets} of $\phi$ as $\{\mathbf{x}: \phi(\mathbf{x}) \le \lambda\}$ and the {\em upper-level sets} of $\phi$ as $\{\mathbf{x}: \phi(\mathbf{x}) \ge \lambda\}$. In more specific settings, we let the function $\phi$ be defined on the metric space $(\mathcal{X}, \text{d}_{\mathcal{X}})$. Define the $\text{reach}(\mathbb{A})$ as the largest radius $r$, such that each point in $\cup_{\mathbf{x} \in \mathbb{A}} \text{B}(\mathbf{x}, r)$ has a unique projection unto $\mathcal{X}$, where $\text{B}(\mathbf{x}, r)$ is a ball with radius $r$ centered on $\mathbf{x}$. The reach is also referred to as the ``condition number,'' and it quantifies the smoothness of the underlying manifold \citep{federer1959curvature,niyogi2008finding}.
Denote by $\mathcal{K}(\mathcal{X}, \kappa)$ the class of all manifolds such that for $\mathbb{A} \in \mathcal{K}(\mathcal{X}, \kappa)$, $\text{reach}(\mathbb{A}) \ge \kappa$, where $\kappa$ is a fixed positive constant. Let the lower bound $\underline{b}(\mathcal{K}(\mathcal{X}, \kappa))$ and the upper bound $\overline{b}(\mathcal{K}(\mathcal{X}, \kappa))$ be positive constants depending on the geometry of the class $\mathcal{K}(\mathcal{X}, \kappa)$ but not on any specific manifold in $\mathcal{K}(\mathcal{X}, \kappa)$. 
\begin{assumption}
    The following assumptions are made for the density function $f$ and the distribution $\mathbb{P}$: (i) the support $\mathbb{X}$ of the distribution $\mathbb{P}$ is bounded, and (ii) $f$ is tame and  satisfies the following:
            $0 < \underline{b}(\mathcal{K}(\mathcal{X}, \kappa)) \le \inf_{\mathbf{x} \in \mathcal{X}}f(\mathbf{x}) \le \sup_{\mathbf{x} \in \mathcal{X}}f(\mathbf{x}) \le \overline{b}(\mathcal{K}(\mathcal{X}, \kappa)) < \infty.$
        The tameness of $f$ implies it has a finite number of critical values, ensuring the topological complexity of its level sets remains systematically bounded \citep{edelsbrunner2022computational}.
    \label{assump:density-smothness}
\end{assumption}

\subsection{Homology of simplicial complexes} \label{sec:homology_simplex}
Homology is an area of mathematics that looks for holes in a topological space, and persistent homology looks for holes in data. These holes are formalized through concepts from algebraic topology and are represented by homology groups of varying dimensions \citep{hatcher2002algebraic,edelsbrunner2022computational}. 
Specifically, the zero-dimensional homology group (${H}_0$) contains connected components (clusters), the one-dimensional homology group (${H}_1$) contains loops, the two-dimensional homology group ($H_2$) contains voids like the interior of a balloon, and more generally, the $k$-dimensional homology group (${H}_k$) represents $k$-dimensional holes.
In this work, we mainly represent topological spaces with simplicial complexes.
A $k$-simplex $\mathbf{s} = (s_0, \cdots, s_k)$ is a $k$-dimensional polytope of $k+1$ affinely independent points $s_0, \cdots, s_k$.
A simplicial complex $\mathbb{C}$ is a finite set of simplices such that for any $\mathbf{s}^{1}, \mathbf{s}^{2} \in \mathbb{C}$, $\mathbf{s}^1 \cap \mathbf{s}^2$ is a face
of both simplices, or the empty set; and a face of any simplex $\mathbf{s}\in \mathbb{C}$ is also a simplex in $\mathbb{C}$. (A face of a simplex is the convex hull of any non-empty subset of points that define the simplex.) The homology is computed from these simplicial complexes built along a sequence of filtration values.

\subsection{Persistent homology on point clouds}
The underlying topological space is often only indirectly observed through noisy point cloud data sampled from it.
A common approach to constructing simplicial complexes in TDA for point clouds is the Vietoris-Rips $(VR)$ complex \citep{vietoris1927hoheren,edelsbrunner2022computational}. A $VR$ complex is constructed over a finite set of points $S = \{s_0, s_1, \cdots, s_n\}$ using a distance parameter $\delta$. For any subset of $k$ points $\{ s_{i_1}, \cdots, s_{i_k} \}$, a $(k-1)$-dimensional simplex is formed when the pairwise Euclidean distance between all points is at most $\delta$. A collection of all such simplices forms the $VR$ complex denoted as $VR(S, \delta)$. The composition of the simplicial complex progresses hierarchically with the distance parameter $\delta$. This leads to the concept of filtration, which defines an inclusion relation between the simplicial complexes for a set of $\delta$ values. More formally, for an ordered sequence of $\delta$ values: $0 < \delta_1 < \delta_2 < \cdots < \delta_q < \infty$, the $VR$ complexes admit a nested structure as
$VR(S, 0) \subset VR(S, \delta_1) \subset \cdots \subset VR(S, \delta_q) \subset VR(S, \infty)$.
The inclusion relation between the VR complexes induces a map between the $k$-dimensional homology groups as 
${H}_k(VR(S, 0)) \xrightarrow{} {H}_k(VR(S, \delta_1)) \xrightarrow{} \cdots \xrightarrow{} {H}_k(VR(S, \infty))$.
The notion of persistent homology is developed through these homology maps by tracking the changes in the features (i.e., homology group generators) of these nested homology groups. 
The birth time and death time of features along this sequence encodes topological changes in the groups. For a homology group ${H}_k$, we denote the birth time and death time of the $j$-th feature by $b_j$ and $d_j$, respectively. The persistence of the feature is given by $d_j - b_j$, and longer persistence often is considered to be topological signal while shorter persistence often represents topological noise \citep{fasy2014confidence}. If we let $k_j$ to be the homology group dimension of the $j$-th feature, and ${J}$ the index set of the features of the homology groups, then the set 
\begin{equation}
    \text{Dgm}(S) = \{(b_j, d_j, k_j): \forall j \in J\} \cup \Delta,
\end{equation}
where $\Delta$ represent a set of points where the birth time is equal to the death time, characterizes the persistence of the features, and is used to construct a graphical summary referred to as a {\em persistence diagram}.
Figure \ref{fig:vietoris-rips-filtration} illustrates the main concepts in this section, where black points (zero-simplices) in Figure~\ref{fig:point-cloud-delta08} and~\ref{fig:point-cloud-delta15} denotes the data with cyan balls of diameter $0.8$ and $1.5$, respectively. Figure \ref{fig:persistence-diagram} shows the corresponding persistence diagram.
\begin{figure}
     \centering
      \begin{subfigure}[b]{0.24\textwidth}
         \centering
         \includegraphics[width=\textwidth]{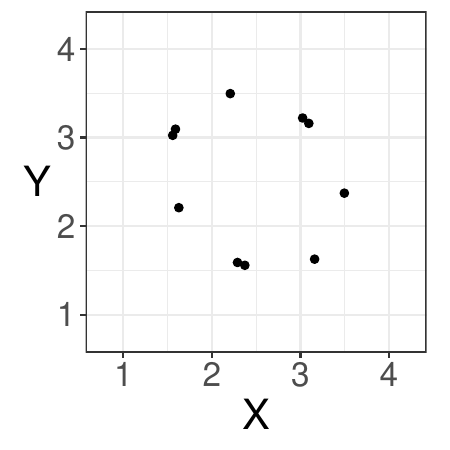}
         \caption{Point set $S$}
         \label{fig:toy-point-cloud}
     \end{subfigure}
     \hfill
     \begin{subfigure}[b]{0.24\textwidth}
         \centering
         \includegraphics[width=\textwidth]{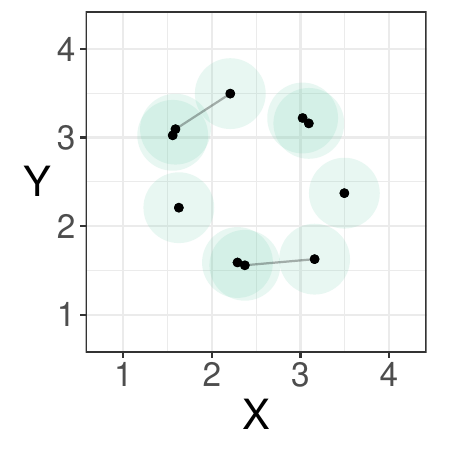}
         \caption{$VR(S, \delta=0.8$)}
         \label{fig:point-cloud-delta08}
     \end{subfigure}
     \hfill
     \begin{subfigure}[b]{0.24\textwidth}
         \centering
         \includegraphics[width=\textwidth]{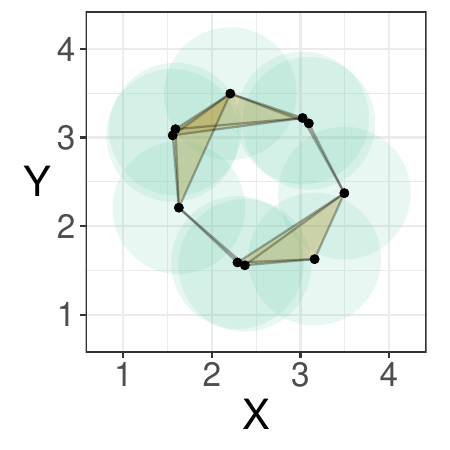}
         \caption{$VR(S, \delta=1.5$)}
         \label{fig:point-cloud-delta15}
     \end{subfigure}
     \hfill
     \begin{subfigure}[b]{0.24\textwidth}
         \centering
         \includegraphics[width=\textwidth]{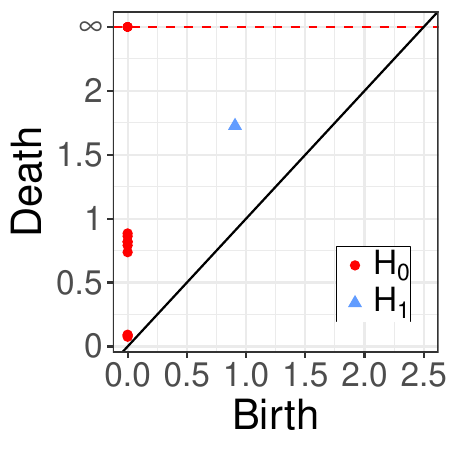}
         \caption{Persistence diagram}
         \label{fig:persistence-diagram}
     \end{subfigure}
        \caption{VR filtration and persistence diagram. The zero-simplices (black points, a-c) sampled randomly around a circle. Balls (cyan) of diameter $\delta=0.8$ and $\delta=1.5$ are drawn around the points in (b) and (c), respectively, resulting in one-simplices (black segments) and two-simplices (orange triangles). The persistence diagram (d) has $H_0$ (red points) and $H_1$ (blue triangles) features.}
        \label{fig:vietoris-rips-filtration}
\end{figure}

A set of persistence diagrams $\{\text{Dgm}(S)\}$ can be endowed with a distance measure, such as the bottleneck distance. The bottleneck distance gives the minimal $L_{\infty}$ distance between bijections of any two diagrams. Let $S_1$ and $S_2$ be two  finite compact subsets of $\mathbb{R}^{d}$, with $\text{Dgm}(S_1)$ and $\text{Dgm}(S_2)$ as their corresponding VR filtration persistence diagrams. The bottleneck distance, $\text{d}_\text{B}$, between the two persistence diagrams is defined as: 
\begin{equation}
    \text{d}_\text{B}\left(\text{Dgm}(S_1), \text{Dgm}(S_2)\right) = \inf_{\gamma} \sup_{\mu \in \text{Dgm}(S_1)} || \mu - \gamma(\mu) ||_\infty,
    \label{eqn:bottleneck-dist}
\end{equation}
where the infimum is taken over all bijections $\gamma: \text{Dgm}(S_1) \xrightarrow{} \text{Dgm}(S_2)$. Let $S_1$ and $S_2$ be endowed with the Euclidean metric (Equation~\eqref{eqn:euclidean-distance}), then their Hausdorff distance, $\text{d}_\text{H}$, is given by
\begin{equation}
    \text{d}_\text{H}(S_1, S_2) = \max \left\{ \sup_{\mathbf{s}_1 \in S_1 }\text{d}_{S_2}(\mathbf{s}_1), \sup_{\mathbf{s}_2 \in S_2 }\text{d}_{S_1}(\mathbf{s}_2)\right\}.
    \label{eqn:dist-hausdorff}
\end{equation}
A fundamental result on persistence diagrams is that they are stable summaries in many settings (i.e., a small change in a point cloud results in a small change in the corresponding persistence diagram) \citep{chazal2021introduction}. This stability relation can be stated as:
\begin{equation}
    \text{d}_\text{B}\left(\text{Dgm}(S_1), \text{Dgm}(S_2)\right) \le 2\text{d}_{\text{H}}(S_1, S_2).
    \label{eqn:stability-bn}
\end{equation}
Similar results can be obtained for functions, which is discussed in the next section.

\subsection{Persistent homology on functions}
Functions defined on the vertices of the simplicial complex $\mathbb{C}$ provides another means to characterize the topology of the underlying data generating space. Let $\phi: \mathcal{X} \xrightarrow[]{} \mathbb{R}$, and assume $\phi$ is extended to the simplices of $\mathbb{C}$ such that $\phi(\mathbf{s}) = \max_{0 \le i \le k} \phi(s_i)$ for any simplex $\mathbf{s} = (s_0, \cdots, s_k) \in \mathbb{C}$. The sequence of complexes $\mathbb{C}_\delta = \{\mathbf{s} \in \mathbb{C}: \phi(\mathbf{s}) \le \delta\}$ creates a nested structure: $\mathbb{C}_{\delta_1} \subseteq \mathbb{C}_{\delta_2}$, $\delta_1 < \delta_2$, and defines a {\em lower-level set} filtration on $\phi$. An {\em upper-level set} filtration can be defined analogously by considering the case where $\phi(\mathbf{s}) \ge \delta$. We denote the resulting persistence diagram by $\text{Dgm}(\phi)$, such that topological feature $(b, d) \in \text{Dgm}(\phi)$ persists in the space $H_k(\phi^{-1}(-\infty, \delta))$, for $b \le \delta < d$. Similar to the VR filtration, we can endow this space of persistence diagrams with the bottleneck distance as defined in Equation~\eqref{eqn:bottleneck-dist}, and these persistence diagrams can also be shown to be stable summaries (i.e., small perturbations in the function space results in small changes in the persistence diagrams) \citep{cohen2005stability,chazal2016structure}. This results in the following bound on the bottleneck distance for two functions $\phi$ and $\psi$ under the assumption of tameness (Assumption~\ref{assump:density-smothness}):
\begin{equation}
    \text{d}_{B}\left( \text{Dgm}(\phi), \text{Dgm}(\psi) \right) \le ||\phi -\psi||_\infty,
    \label{eqn:bottleneck-stability-functions}
\end{equation}
where $||\phi -\psi||_\infty = \sup_{\mathbf{x} \in \mathcal{X}}\left|\phi(\mathbf{x}) - \psi(\mathbf{x}) \right|$.

Two such functions $\phi$ that are relevant to this work are the kernel density function $f_\sigma$ and the DTM function $d_{P, m}$. The kernel density function $f_\sigma(\mathbf{x})$ with bandwidth $\sigma$, and its empirical estimate $\widehat{f}_\sigma(\mathbf{x})$ are defined as:
\begin{equation}
    f_\sigma(\mathbf{x}) = \int_{\mathcal{X}}  K_\sigma(||\mathbf{x} - X||_2)d\text{P}(X), \quad \widehat{f}_\sigma(\mathbf{x}) = \frac{1}{n} \sum_{i = 1}^n K_\sigma(||\mathbf{x} - \mathbf{x}_i||_2),
    \label{eqn:kernel-density}
\end{equation}
where $K_\sigma = {\sigma^{-d}}K\left(||\mathbf{x} - X||_2/\sigma\right)$, and $K$ is a d-dimensional kernel that is non-negative and integrates to one.
While this kernel density function captures the shape and distribution of mass in the space $\mathcal{X}$, the DTM function provides a robust means to characterize this shape by approximating its distance function. The DTM function $d_{\text{P}, m}(\mathbf{x})$ and its empirical estimate $\widehat{\text{d}}^2_{\text{P}, m}(\mathbf{x})$ are defined as \citep{chazal2011geometric}:
\begin{equation}
    d_{\text{P}, m}(\mathbf{x}) = \sqrt{\frac{1}{m}\int_0^m F_{\mathbf{x}}^{-1}(\mu)d\mu}, \quad \quad \widehat{\text{d}}^2_{\text{P}, m}(\mathbf{x}) = \frac{1}{k}\sum_{X \in N_K(\mathbf{x})} || X - \mathbf{x} ||_2,
    \label{eqn:dtm-function}
\end{equation}
where $F_{\mathbf{x}}(t) = \text{Pr}\left( ||X - \mathbf{x}||_2 \le t  \right)$, $0 < m < 1$ is the resolution, and $N_K(\mathbf{x})$ is the set of $k$-nearest neighbors ($k$-NNs) to $\mathbf{x}$. The DTM filtration is a robust approximation of the VR filtration. 

\subsubsection{Sensitivity and artifacts in filtration functions}

In practice, the estimation of filtration functions relies on the empirical probability measure $\mathbb{P}_n$, which assigns a probability mass of $1/n$ to each data point $\mathbf{x}$. As a result, the empirical function $\phi_n$, whether it represents the empirical KDE or the empirical DTM function, exhibits sensitivity to noise and sample density variations. This sensitivity directly affects the persistence of the resulting features.
Recent efforts have generalized these filtration methods by defining more flexible kernel distances, enabling the use of more robust kernel distances \citep{fasy2018robust,anai2020dtm}. However, this robustness is largely achieved by smoothing out the low persistence features, which tends to reduce the persistence of all features, especially the most persistent features. Figure~\ref{fig:overlap-annulus}  and \ref{fig:overlap-annulus-smoothed} illustrates this phenomenon. This work seeks to build on these robust filtration schemes while minimizing the reduction in the life span of the most persistent feature. We propose a rejection sampling technique on a smooth surface, and demonstrate that the sampled points retain the same features as  $\mathbb{X}$ and that this approach minimally affects the persistence of the most prominent features. This results in a method with reduced bias in estimating maximal persistence. Additionally, we derive lower bounds for maximal persistence, providing a principled approach to assess the statistical significance of the most persistent features.

 
\section{Maximal TDA method} \label{sec:methods}
This section present the MaxTDA method and corresponding theoretical results. In particular, we show the construction of the smooth subsamples for inference on the maximal persistence features, prove its homology preserving properties as well as some bias reduction and consistency results.

\subsection{Overview of methodology}
Before presenting the technical details of our method, we first provide an overview of our approach to estimating and performing statistical inference on the maximal persistent features. The key challenge we address is how to reliably estimate the most persistent features from noisy point cloud data while minimizing the persistence reduction (see Section~\ref{subsec:statistical-motivation}). Traditional approaches often face a trade-off between noise reduction and feature preservation. Our method seeks to resolve this through the construction of ``smooth subsamples'' using a combination of kernel density and level-set estimation via rejection sampling, where we only accept proposed points where the estimated density exceeds a threshold $\lambda$. The details of this construction are discussed in Section~\ref{subsec:subsamples}, and the validity of such a construction is established in Theorem~\ref{thm:consistency-subsamples}.
This thresholding naturally filters out likely noise points while preserving the strength of genuine features, as true features tend to manifest in regions of high density.

The remainder of this section develops statistical inference methods for working with features constructed from the smooth subsamples of the thresholded KDE. We analyze the bias reduction properties of the maximal persistence estimator and develop methods for assessing its statistical significance, providing both theoretical guarantees and practical tools for identifying statistically significant features in noisy data with varying density distributions. Extensions of these inference methods to functions of the maximal persistence are also discussed.

\subsection{Stability of Maximal Persistence} \label{subsec:statistical-motivation}
The support $\mathbb{X}$ is not directly observed but is studied through the point cloud $\mathbb{X}_n$. Let $\text{Dgm}(\mathbb{X}_n)$ be the VR-based filtration persistence diagram on $\mathbb{X}_n$, and $\text{Dgm}(\mathbb{X})$ the true underlying persistence diagram on the support $\mathbb{X}$. For $\phi$ defined on $\mathcal{X}$, and its empirical estimate $\phi_n$, let $\text{Dgm}(\phi(\mathbb{X}))$, and $\text{Dgm}(\phi_n(\mathbb{X}))$ denote their persistence diagrams from upper-level set filtrations of $\phi$ and $\phi_n$, respectively. When an exposition applies to either a VR filtration or the upper-level set filtration of the KDE or DTM, the persistence diagram is generically denoted as $\text{Dgm}(\cdot)$ or simply $\text{Dgm}$. Define the maximum persistence of the features on the persistence as $\text{mp}\left[ \text{Dgm}(\cdot) \right]$. The following lemma presents the stability result for the maximal persistence estimator.
 \begin{lemma}[Maximal Persistence Stability]
     Let $\Delta$ be the persistence diagram with points only along the diagonal. Let $\phi_n$ be an empirical KDE or DTM function defined on the sample $\mathbb{X}_n$. Then the following results hold: \\
         (i) The maximum persistence can be expressed in terms of the bottleneck distance:
         \begin{equation}
             \text{mp}\left[\text{Dgm}(\cdot)\right] = 2\text{d}_{\text{B}}\left(\text{Dgm}(\cdot), \Delta  \right).
         \end{equation}
         (ii) Let $\widehat{\nabla}$ be defined as $\widehat{\nabla} = \left| \text{mp}[\widehat{\text{Dgm}}] - \text{mp}[\text{Dgm}] \right|$, then it holds that:
         \begin{equation}
             \widehat{\nabla} \le 2\text{d}_{\text{B}}\left( \widehat{\text{Dgm}}, \text{Dgm} \right),
         \end{equation}
         where $\widehat{\text{Dgm}}$ denotes the empirical persistence diagram estimate of $\text{Dgm}$.
     \label{lemma:maximal-persistence-stability}
 \end{lemma}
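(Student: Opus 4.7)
The plan is to prove part (i) directly from the definitions of bottleneck distance and maximal persistence, and then obtain part (ii) as a one-line consequence of part (i) together with the reverse triangle inequality for the bottleneck metric.

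For part (i), I would start by recalling that every point $(b_j,d_j)$ in $\text{Dgm}(\cdot)$ has a unique orthogonal projection onto the diagonal, namely $((b_j+d_j)/2,(b_j+d_j)/2)$, and that this projection realizes the $L_\infty$ distance from $(b_j,d_j)$ to the diagonal, equal to $(d_j - b_j)/2$. Since $\Delta$ consists of the entire diagonal with infinite multiplicity, any bijection $\gamma:\text{Dgm}(\cdot)\to\Delta$ is forced to send each off-diagonal point $(b_j,d_j)$ to some diagonal point, while all multiplicity-$\Delta$ points in $\text{Dgm}(\cdot)$ can be matched to themselves at zero cost. The optimal choice sends each off-diagonal point to its orthogonal projection, giving
\begin{equation}
\text{d}_{\text{B}}(\text{Dgm}(\cdot),\Delta)=\sup_{j\in J}\tfrac{1}{2}(d_j-b_j)=\tfrac{1}{2}\,\text{mp}[\text{Dgm}(\cdot)].
\end{equation}
Rearranging yields the claimed identity. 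The only subtlety here is justifying that any bijection that moves an off-diagonal point to a non-projection diagonal point can only increase the supremum, so that the stated infimum is attained.

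For part (ii), I would invoke the fact that the bottleneck distance is a genuine metric on the space of persistence diagrams (see the stability results cited around Equations~\eqref{eqn:stability-bn} and~\eqref{eqn:bottleneck-stability-functions}), so it satisfies the triangle inequality. The reverse triangle inequality then gives
\begin{equation}
\bigl|\text{d}_{\text{B}}(\widehat{\text{Dgm}},\Delta)-\text{d}_{\text{B}}(\text{Dgm},\Delta)\bigr|\le \text{d}_{\text{B}}(\widehat{\text{Dgm}},\text{Dgm}).
\end{equation}
Multiplying by $2$ and substituting the identity from part (i) on both sides of the left-hand expression immediately yields $\widehat{\nabla}\le 2\,\text{d}_{\text{B}}(\widehat{\text{Dgm}},\text{Dgm})$.

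There is no serious obstacle here; the whole argument is essentially a bookkeeping exercise with the bottleneck matching. The only point that warrants care is the optimality step in part (i), where I must argue that no bijection beats the orthogonal-projection matching; this is clean because the $L_\infty$ distance from any point $(b,d)$ with $d>b$ to any diagonal point $(t,t)$ is at least $(d-b)/2$, so the supremum over any bijection is at least $\tfrac{1}{2}\text{mp}[\text{Dgm}(\cdot)]$, matching the upper bound achieved by the projection matching.
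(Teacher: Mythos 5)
Your proposal is correct and follows essentially the same route as the paper: part (i) via the orthogonal-projection matching onto the diagonal, and part (ii) via the reverse triangle inequality for the bottleneck metric applied to the identity from (i). Your explicit lower-bound argument that the $L_\infty$ distance from $(b,d)$ to any diagonal point is at least $(d-b)/2$ actually makes the optimality step more rigorous than the paper's proof, which simply asserts that the orthogonal projection is the optimal bijection.
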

\begin{proof}
    The maximum persistence $\text{mp}[\text{Dgm}(\cdot)]$ is defined as: $\text{mp}[\text{Dgm}(\cdot)] = \max_{(b, d)\in \text{Dgm}}|d - b|$.
    Similarly, the bottleneck distance $\text{d}_{\text{B}}(\text{mp}[\text{Dgm}(\cdot)], \Delta)$ is defined as:
    \begin{equation}
        \text{d}_{\text{B}}(\text{mp}[\text{Dgm}(\cdot)], \Delta) = \inf_{\gamma} \sup_{(b, d)\in \text{Dgm}}|| (b, d) - \gamma((b, d)) ||_\infty,
    \end{equation}
    where $\gamma: \text{Dgm} \xrightarrow{} \Delta$ defines a bijection between $\text{Dgm}$ and $\Delta$. Note that since $\Delta$ is the diagonal, the optimal bijection $\gamma$ is the orthogonal projection of points in $\text{Dgm}(\cdot)$ to $\Delta$, hence $\gamma((b, d)) = \left( \frac{b+d}{2}, \frac{b+2}{2} \right)$. It then follows that:
    \begin{equation}
        \text{d}_{\text{B}}(\text{mp}[\text{Dgm}(\cdot)], \Delta) = \sup_{(b, d) \in \text{Dgm}(\cdot)} \left\lVert (b, d) -  \left( \frac{b+d}{2}, \frac{b+d}{2} \right)\right\rVert_\infty = \frac{|d^\prime-b^\prime|}{2},
    \end{equation}
    where $(b^\prime, d^\prime)$ are the birth-death pair with the maximal persistence.
    The bound for $\widehat{\nabla}$ can be derived based on the expression:
    \begin{equation}
        \widehat{\nabla} = \left| \text{mp}[\widehat{\text{Dgm}}] - \text{mp}[\text{Dgm}] \right| = 2\left| \text{d}_{\text{B}}(\text{mp}[\widehat{\text{Dgm}}], \Delta) - \text{d}_{\text{B}}(\text{mp}[{\text{Dgm}}], \Delta)  \right| \le 2\text{d}_{\text{B}}(\text{mp}[\widehat{\text{Dgm}}], \text{mp}[{\text{Dgm}}]),
    \end{equation}
    where the last inequality follows from the reverse triangle inequality for metrics. 
\end{proof}
The main object of interest in this work is the maximal persistence $\text{mp}\left[\text{Dgm}(\cdot)\right]$. We now describe the framework to consistently estimate it while reducing the associated bias inherent in estimating these maximal values.

\subsection{Smooth sampling surface} \label{subsec:subsamples}
The methodology for constructing the smooth sampling surface that maximizes the persistence of features is described next. This approach can be used to either maximize the persistence of a single feature or multiple features, depending on the application. The goal here is to obtain samples that better approximate the true underlying topology, and these samples are subsequently used for inference on the maximal persistence features. Our approach uses kernel density estimation to create a smooth sampling surface, enabling the generation of samples that preserve the underlying topological structure. Specifically, given the observed data $\mathbb{X}_n$ drawn according to the distribution $\mathbb{P}$ with density $f$, we approximate this density with the KDE estimate $\widehat{f}_\sigma$. This provides a smooth surface that captures the structure of the manifold while reducing the noise. A dense sample $\mathbb{X}_n^*$ is drawn from the smooth surface using rejection sampling \citep{devroye1986non}. The $\mathbb{X}_n^*$ serves to preserve the persistence of the homology features relative to $\mathbb{X}_n$. Rejection sampling requires a target distribution and a proposal distribution, where samples are drawn from a proposal distribution because of difficulties sampling from the target distribution. For the purpose of this work, the proposal distribution $\mathbb{Q}$ is a function of the volume enclosing the topological space $\mathcal{X}$. The target distribution is the KDE $\widehat{f}_\sigma$. The objective then is to draw samples $\mathbf{x}^*$ according to $\mathbb{Q}$ and accept them based on the target density $\widehat{f}_\sigma$. In particular, for some $\Gamma \ge \sup_{\mathbf{x^*} \in \mathcal{X}} \widehat{f}_\sigma(\mathbf{x}^*)$, the sample $\mathbf{x}^*$ is accepted with probability $\widehat{f}_\sigma(\mathbf{x}^*)/\Gamma$. Algorithm \ref{alg:subsampling} outlines the sampling scheme described here.
\begin{algorithm}
\caption{Smooth Subsampling}
\label{alg:subsampling}
\begin{algorithmic}[1]
\Require Observed data $\{ \mathbf{x}_1, \cdots, \mathbf{x}_n \}$, density threshold $\lambda$, number of generated points $B$.
\Stepa Fit the KDE $\widehat{f}_\sigma$ to the data sample $\{ \mathbf{x}_1, \cdots, \mathbf{x}_n \}$.
\Stepb For $k$ in the range $1, \cdots, B$, \textbf{do} \textbf{Step 3} to \textbf{Step 4}. 
\Stepc Compute $\mathbf{x}^*$ as follows:
\Statex \quad \textbf{Repeat:}
\Statex  \quad \quad (i) Draw a sample ${\mathbf{x}^*}$ from the proposal distribution $\mathbb{Q}$.
\Statex \quad \quad (ii) Compute the density associated with the sample $\mathbf{x}^*$, i.e., evaluate $\widehat{f}_\sigma(\mathbf{x}^*)$.
\Statex \quad \quad (iii) Sample a point $u \sim U(0, \Gamma)$.
\Statex \quad \textbf{Until:} $ u \le \widehat{f}_\sigma(\mathbf{x}^*)$, and $\widehat{f}_\sigma(\mathbf{x}^*) \ge \lambda$.
\Stepd Set $\mathbf{x}^*_k = \mathbf{x}^*$.
\Output Return the samples $\mathbb{X}_B^* = \{ \mathbf{x}^*_1, \cdots, \mathbf{x}^*_B\}$.
\end{algorithmic}
\end{algorithm}
The resulting sample is used to construct a distribution of maximal persistence values by generating random persistence diagrams from the transformed data space. A threshold $\lambda$ is then selected to maximize the persistence of the targeted prominent features.

We now show that the samples $\mathbb{X}_n^*$ obtained via Algorithm \ref{alg:subsampling} preserves the homology of $\mathbb{X}$. This largely follows from the theory of level-set estimation, especially the work of \cite{cuevas1997plug}. An interesting observation made in \cite{bobrowski2017topological} is that recovering the homology of $\mathbb{X}$ does not rely on the consistency of the KDE $\widehat{f}_\sigma$. Also, to avoid making assumptions on the shape of the space $\mathbb{X}$, the Hausdorff metric is used to measure the closeness of the approximation.
\begin{theorem}[Convergence of Smooth Subsamples]
    Let $\mathbb{P}$ be compactly supported on the set $\mathbb{X}$, having bounded density $f$ and $f>\lambda$ for some positive constant $\lambda$. Assume the kernel function $K_\sigma$ is a decreasing function of $\mathbf{x}$ such that as $||\mathbf{x}||\xrightarrow{} \infty$, we have $||\mathbf{x}||^{d+1}K_\sigma(||\mathbf{x}||) \xrightarrow{} 0$. Further assume that $K_\sigma$ is a bounded density such that for some $r_1, r_2$, 
    \begin{equation}
        K_\sigma(||\mathbf{x}||) \ge r_1 \mathbbm{1}(\mathbf{x} \in B(0, r_2)).
    \end{equation}
    Let $\beta_n$ be of order $o(n/\log n)^{1/d}$, then
        $\beta_n \text{d}_\text{H}(\mathbb{X}_n^*, \mathbb{X}) \xrightarrow{} 0$ a.s., where $\beta_n \sigma$ goes to zero with $n$ large.
   \label{thm:consistency-subsamples}
\end{theorem}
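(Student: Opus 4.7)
The plan is to decompose the Hausdorff distance via the triangle inequality. Define the population upper level set $L_\lambda = \{\mathbf{x} \in \mathcal{X} : f(\mathbf{x}) \ge \lambda\}$ and the KDE-based estimate $\widehat{L}_{\lambda,\sigma} = \{\mathbf{x} \in \mathcal{X}: \widehat{f}_\sigma(\mathbf{x}) \ge \lambda\}$. By the assumption that $f > \lambda$ on the support $\mathbb{X}$ (with $f = 0$ outside), we have $L_\lambda = \mathbb{X}$, so
\[
\text{d}_\text{H}(\mathbb{X}_n^*, \mathbb{X}) \le \text{d}_\text{H}(\mathbb{X}_n^*, \widehat{L}_{\lambda,\sigma}) + \text{d}_\text{H}(\widehat{L}_{\lambda,\sigma}, \mathbb{X}),
\]
and it suffices to show each term is $o(\beta_n^{-1})$ almost surely.

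For the second term, I would invoke standard uniform almost-sure consistency of the KDE under the stated kernel conditions, giving $\|\widehat{f}_\sigma - f\|_\infty \to 0$ a.s. Combined with the strict gap $\inf_{\mathbf{x}\in\mathbb{X}} f(\mathbf{x}) > \lambda$ from Assumption~\ref{assump:density-smothness} and the kernel tail assumption $\|\mathbf{x}\|^{d+1} K_\sigma(\|\mathbf{x}\|) \to 0$, the symmetric difference $\widehat{L}_{\lambda,\sigma} \triangle \mathbb{X}$ eventually lies in an $O(\sigma)$-tube of $\partial \mathbb{X}$; the reach lower bound inherent in $\mathcal{K}(\mathcal{X},\kappa)$ prevents pathological boundary behavior. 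This yields $\text{d}_\text{H}(\widehat{L}_{\lambda,\sigma}, \mathbb{X}) = O(\sigma)$ almost surely, and the hypothesis $\beta_n \sigma \to 0$ closes this piece.

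For the first term, I observe that, conditional on $\widehat{f}_\sigma$, the accepted points $\mathbb{X}_n^*$ are i.i.d.\ with density proportional to $\widehat{f}_\sigma(\mathbf{x})\,\mathbbm{1}\{\widehat{f}_\sigma(\mathbf{x}) \ge \lambda\}$, which is uniformly bounded below by $\lambda/Z_\sigma$ on $\widehat{L}_{\lambda,\sigma}$, where $Z_\sigma = \int \widehat{f}_\sigma\,\mathbbm{1}_{\widehat{L}_{\lambda,\sigma}}$. A density bounded away from zero on a compact set is precisely the hypothesis under which Cuevas--Fraiman-type sample coverage theorems (the very ingredient cited in the excerpt) give $\text{d}_\text{H}(\mathbb{X}_n^*, \widehat{L}_{\lambda,\sigma}) = O((\log n/n)^{1/d})$ a.s. Since $\beta_n = o((n/\log n)^{1/d})$, multiplying by $\beta_n$ sends this piece to zero.

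The main obstacle is handling the two-stage randomness: $\widehat{f}_\sigma$ depends on $\mathbb{X}_n$, and $\mathbb{X}_n^*$ is generated conditionally on $\widehat{f}_\sigma$. To pass from the conditional coverage bound to an unconditional almost-sure bound, I would show that the constants driving the Cuevas--Fraiman rate (the lower density $\lambda/Z_\sigma$ and the volume/reach of $\widehat{L}_{\lambda,\sigma}$) can be taken uniformly in $n$ on an event of probability one, using the level-set convergence of the preceding paragraph together with Assumption~\ref{assump:density-smothness}. A secondary subtlety is that $\widehat{L}_{\lambda,\sigma}$ need not itself lie in $\mathcal{K}(\mathcal{X},\kappa)$ for finite $n$; I would sidestep this by restricting to the almost-sure event on which $\widehat{L}_{\lambda,\sigma}$ is contained in an $O(\sigma)$-inflation of $\mathbb{X}$, where the reach constants are essentially inherited from $\mathbb{X}$ and the standard coverage argument applies verbatim.
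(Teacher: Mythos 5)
Your outline is sound and, at its core, takes the same route as the paper: both arguments ultimately rest on the Cuevas--Fraiman plug-in level-set theory. The difference is one of completeness. The paper's own proof is a single sentence deferring to Theorem~3 of Cuevas and Fraiman (1997), augmented only by the remark that for the (non-compactly-supported) Gaussian kernel one needs $\beta_n^{d+1}\sigma \to 0$. You make the decomposition explicit, $\text{d}_\text{H}(\mathbb{X}_n^*, \mathbb{X}) \le \text{d}_\text{H}(\mathbb{X}_n^*, \widehat{L}_{\lambda,\sigma}) + \text{d}_\text{H}(\widehat{L}_{\lambda,\sigma}, \mathbb{X})$, identify the second term with the content of the cited theorem, and then separately control the first term via the conditional i.i.d.\ structure of the accepted rejection samples, whose density is bounded below by $\lambda/Z_\sigma$ on the estimated level set. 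That first term is genuinely needed: the Cuevas--Fraiman result concerns the level set of $\widehat{f}_\sigma$ (or the retained data points), not freshly drawn samples from it, so your version supplies a step the paper's citation leaves implicit. You also correctly identify the two-stage randomness as the place where care is required.

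Two caveats on the details. The coverage rate $O((\log n/n)^{1/d})$ requires an inner-regularity (standardness) condition on the set being covered, and containment of $\widehat{L}_{\lambda,\sigma}$ in an $O(\sigma)$-inflation of $\mathbb{X}$ does not by itself exclude thin protrusions of $\widehat{L}_{\lambda,\sigma}$ that are slow to be covered; the cleaner fix is to run the coverage argument against $\mathbb{X}$ itself, which eventually satisfies $\mathbb{X} \subset \widehat{L}_{\lambda,\sigma}$ (since $\inf_{\mathbb{X}} f > \lambda$) and is standard under Assumption~\ref{assump:density-smothness}, while your $O(\sigma)$-tube bound handles the opposite direction $\sup_{\mathbf{x}\in\mathbb{X}_n^*}\text{d}_{\mathbb{X}}(\mathbf{x})$. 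Relatedly, $\|\widehat{f}_\sigma - f\|_\infty \to 0$ fails at $\partial\mathbb{X}$ where $f$ drops to zero, so the level-set convergence should be argued on compact subsets of the interior together with the kernel tail condition $\|\mathbf{x}\|^{d+1}K_\sigma(\|\mathbf{x}\|)\to 0$ for the exterior, which is precisely the role those hypotheses play in the cited theorem. Neither issue changes the conclusion, but both need to be addressed to make the sketch a proof.
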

In our analysis, we used a Gaussian kernel, which is not compactly supported. Hence we make the additional assumption that the bandwidth $\sigma \xrightarrow{} 0$ as $n\xrightarrow{} \infty$ in such a way that  $\beta_n^{d+1}\sigma \xrightarrow{} 0 $, then the proof follows directly from Theorem $3$ in \cite{cuevas1997plug}. 
By constructing a smooth subsample, we intrinsically reduce the magnitude of any topological error in subsequent persistent homology computations on these smooth subsamples. For example, in the initial sample $\mathbb{X}_n$, the randomness in the sample could introduce points that results in additional features. The kernel smoothing initially reduces the presence of such outlying points, and an appropriate choice of $\lambda$ (which depends on the specific application) enhances the originally significant homological features. {\em In summary, unless the randomness introduces features that dominate the most persistent real feature, MaxTDA guarantees a smooth recovery of this original dominant feature}. If randomness in the sample introduces more persistent features than the most persistent real feature, it is not generally feasible to recover the real feature \citep{fasy2014confidence,bobrowski2017topological}. We demonstrate this concept in our numerical studies in Section~\ref{subsec:topological-recovery} and~\ref{subsec:sampling-variability}. The next section discusses how to select the optimal smoothing bandwidth and the level-set threshold parameters.

\subsubsection{Parameter selection}\label{sec:parameter-selection}
The choice of KDE bandwidth $\sigma$ and the level-set threshold $\lambda$ are essential to $\mathbb{X}_n^*$ recovering the topology of $\mathbb{X}$. These values are not known in practice, hence we provide a data-dependent estimation process for selecting these parameters. For a given homology dimension, let $\ell_i(\lambda, \sigma)$ be the lifetime (i.e., persistence) of the $i$-th feature on the persistence diagram $\widehat{\text{Dgm}}$ associated with $\mathbf{X}_n^*$. Consider the  ordered lifetimes
%
    $\ell_1(\lambda, \sigma) \ge \ell_2(\lambda, \sigma) \ge \cdots \ge \ell_T(\lambda, \sigma)$,
%
where $T$ is the number of features of interest. The cumulative persistence of the top $T$ features is given by:
$C_T(\lambda, \sigma) = \sum_{i = 1}^T \ell_i(\lambda, \sigma)$. The goal is to choose the parameter $\lambda$ and $\sigma$ that maximizes $C_T(\lambda, \sigma)$. The parameters $(\lambda^*, \sigma^*)$ are determined by solving the optimization problem: $(\lambda^*, \sigma^*) = \argmax_{(\lambda, \sigma) \in \Theta} C_T(\lambda, \sigma)$, where $\Theta$ denotes the feasible parameter space for $\lambda$ and $\sigma$. Note that this process can be augmented to emphasize certain features by assigning weights $\{\omega_1, \cdots, \omega_T\}$ to the lifetimes. The number of features $T$ can be chosen based on the expected topology of $\mathbb{X}$, or by adaptively by analyzing the decay of the ordered lifetimes $\ell_i(\lambda, \sigma)$. A sharp drop in $\ell_i(\lambda, \sigma)$ beyond a certain index indicates a natural cutoff for $T$. For the bandwidth $\sigma$, we found that the average $k$-NN distance (for $k$ between $1$ and $5$) between points in $\mathbb{X}_n$ provides a good parameter search space.

\subsection{Bias reduction}
Existing methods for estimating a persistence diagram in the presence of noise or sampling variability can identify the maximal persistent features. This often involves smoothing out low persistence features, which consequently reduces the lifetime of the most persistent $H_1$ features. This results in a bias in estimating the lifetime of the maximal persistent features. In this section, we discuss this phenomenon and provide results that shows the proposed MaxTDA method helps reduce this bias for an appropriate choice of thresholding parameter $\lambda$ and for a range of bandwidths $\sigma$.

\subsubsection{Source of bias in maximal persistence}
Robust persistent homology methods, such as smoothing, subsampling, filtering, or thresholding, implicitly bias the persistence estimates by reducing the lifetimes of the features. The following example illustrates this bias. Let $\mathcal{P}$ be a class of probability distributions satisfying Assumption~\ref{assump:density-smothness}. Further assume that there exists positive constants $c$ and $c^\prime$ such that for data $\mathbf{x}\in \mathbb{X}$ and $d^\prime < d$:
\begin{equation}
    \text{vol}_d\left(B(\mathbf{x}, r) \cap \mathbb{X}\right) \ge c\left(1- \frac{r^2}{4\kappa^2}\right)^{d^\prime/2}r^{d^\prime} \ge c^\prime r^{d^\prime},
\end{equation}
where $\text{vol}_{d}(\cdot)$ denotes the volume of a $d$-dimensional ball, and $\kappa$ is as defined in Assumption~\ref{assump:density-smothness}.
This is the usual regularity assumption that removes certain pathological manifolds, such as those with sharp peaks or cusps. In practical terms, for every point $\mathbf{x} \in \mathbb{X}$, if you take a ball of radius $r$ around $\mathbf{x}$, the portion of the ball lying in $\mathbb{X}$ has a volume that scales with $r^{d^\prime}$, that is, $\mathbb{X}$ is ``thick enough'' in every small neighborhood such that there are no parts that are infinitesimally thin or sharply peaked. Let $\mathbb{X}_n$ be drawn according to a distribution $\mathbb{P} \in \mathcal{P}$ which is supported on $\mathbb{X}$. Let $\widehat{\text{Dgm}}$ and Dgm be the persistence diagrams associated with $\mathbb{X}_n$ and $\mathbb{X}$, respectively. Then the following inequality holds:
\begin{equation}
    \text{mp}[\widehat{\text{Dgm}}] \le \text{d}_\text{B}(\widehat{\text{Dgm}}, \text{Dgm}) + \text{mp}[\text{Dgm}],
    \label{eqn:mp-decomp}
\end{equation}
which follows from applying the triangle inequality to $\text{mp}[\widehat{\text{Dgm}}] =  \text{d}_\text{B}(\widehat{\text{Dgm}}, \nabla)$.
This implies the bias: $E( \text{mp}[\widehat{\text{Dgm}}] ) - \text{mp}[\text{Dgm}]$ is directly upper bounded  by the expected bottleneck distance between $\widehat{\text{Dgm}}$ based on $\mathbb{X}_n$ and $\text{Dgm}$ based on $\mathbb{X}$. Therefore, a ``good'' representation of $\mathbb{X}$ can lead to a lower bias in estimating $\text{mp}[\widehat{\text{Dgm}}]$. Next, we discuss how the proposed framework
provides a good representation of $\mathbb{X}$ with a thresholded KDE.

\subsubsection{Role of sampling and thresholding}
Consider the setup where two samples, $\mathbb{X}_{n, 0}^*$ and $\mathbb{X}_{n, \lambda}^*$, are drawn using Algorithm~\ref{alg:subsampling} with threshold values of $0$ and  $\lambda$, respectively.  The choice of $\lambda$ and $n$ influence the bias in estimating the maximal persistence. We consider the case of the VR filtration, but the analysis also applies to filtrations of $\phi(\cdot)$. From Equation~\eqref{eqn:mp-decomp}, the maximal persistence associated with $\mathbb{X}_{n, \lambda}^*$ is given as follows:
\begin{equation}
    \zeta(n, \lambda)\text{mp}[\widehat{\text{Dgm}}(\mathbb{X}_{n, \lambda}^*)] = \text{d}_\text{B}(\widehat{\text{Dgm}}(\mathbb{X}_{n, \lambda}^*), \text{Dgm}(\mathbb{X})) + \text{mp}[\text{Dgm}(\mathbb{X})],
    \label{eqn:mp-dcomp-lambda}
\end{equation}
where $\zeta(n, \lambda)$ is an unspecified sequence depending on $n$ and $\lambda$, and goes to $1$ for $n$ large. In the limit as $\lambda\xrightarrow{}0$, we have by construction that $\zeta(n, \lambda) = \zeta(n, 0)$. Hence the bias of the smoothed and unsmoothed estimators are the same as $n \xrightarrow{}\infty$, and $\lambda \xrightarrow{} 0$:
\begin{equation}
    \lim_{n \xrightarrow{} \infty, \lambda \xrightarrow{} 0}E\left( \text{d}_\text{B}(\widehat{\text{Dgm}}(\mathbb{X}_{n, \lambda}^*), \text{Dgm}(\mathbb{X}))\right) = \lim_{n \xrightarrow{} \infty} E\left( \text{d}_\text{B}(\widehat{\text{Dgm}}(\mathbb{X}_{n, 0}^*), \text{Dgm}(\mathbb{X})\right).
\end{equation}
Under finite sampling, the benefits of the thresholding lie in the difference in the rates of convergence of both $\mathbb{X}_{n, \lambda}^*$ and $\mathbb{X}_{n, 0}^*$ to $\mathbb{X}$. For example, consider $\beta_n \xrightarrow{} \infty$ from Theorem~\ref{thm:consistency-subsamples}, \cite{cuevas1997plug} show any rate of order $(n/\log n)^{1/d} = O(\beta_n)$ cannot be achieved by $\mathbb{X}_{n, 0}^*$
That is, consider a convergent rate that is faster than $\beta_n$ for $\mathbb{X}_{n, \lambda}^*$ to $\mathbb{X}$, say $\beta_n^*\ge (n/\log n)^{1/d}$, then $\beta_n^*$ cannot be achieved when estimating $\mathbb{X}$ with $\mathbb{X}_{n, 0}^*$ \citep{cuevas1997plug}.  
Hence, for an appropriate choice of $\lambda$, we conjecture that:
\begin{equation}
    E\left( \text{d}_\text{B}(\widehat{\text{Dgm}}(\mathbb{X}_{n, \lambda}^*), \text{Dgm}(\mathbb{X})) \right) \le E\left(  \text{d}_\text{B}(\widehat{\text{Dgm}}(\mathbb{X}_{n, 0}^*), \text{Dgm}(\mathbb{X})) \right).
\end{equation}
While this inequality has been observed empirically (see Figure~\ref{fig:overlap_annulus_max_pers}), a formal theoretical proof remains an open challenge. The primary difficulty in establishing such a result lies in deriving an explicit form for $\zeta(n, \lambda)$, which would require strong assumptions on the geometric properties of the support $\mathbb{X}$ to obtain a closed-form expression.

\subsection{Statistical significance of the maximal persistence}
\label{sec:signifiance-max-persistence}
The statistical significance of the maximal persistence estimator $\text{mp}[\widehat{\text{Dgm}}]$ is determined through a lower bound for $\text{mp}[\widehat{\text{Dgm}}]$. This is equivalent to bounding the difference $\widehat{\nabla} = \left| \text{mp}[\widehat{\text{Dgm}}] - \text{mp}[\text{Dgm}] \right|$. A method for constructing confidence sets for persistence diagrams by bootstrapping the bottleneck distance was proposed in \cite{fasy2018robust}. The construction of the lower bound for $\widehat{\nabla}$ follows the same framework. We first state the following consistency result for $\widehat{\nabla}$ based on the upper-level set filtration of the density function, and similar consistency results holds for other functions such as the DTM and other kernel distances.
\begin{theorem}[Consistency]
    Let $\phi$ be a density function defined on $\mathcal{X}$, and let $\phi_n$ be its empirical estimate according to Equation~\eqref{eqn:kernel-density} based on the sample $\mathbb{X}_n^*$ from Algorithm~\ref{alg:subsampling}. Let $\{c_1, \cdots, c_k\}$ and $\{c^n_1, \cdots, c^n_k\}$ be the critical points of $\phi$ and $\phi_n$, respectively. Assume that the critical points of $\phi$ and $\phi_n$ are close enough such that the maximal difference at these critical points is bounded as: $\max_{i}| \phi_n(c_i^n) - \phi(c_i)| \le \frac{1}{2} \min_{i\ne j} | \phi(c_i) - \phi(c_j) | - || \phi_n - \phi ||_\infty$, and  $2|| \phi_n - \phi ||_\infty \leq \frac{1}{2} \min_{i\ne j} | \phi(c_i) - \phi(c_j) |$.
    Then $\text{mp}[\widehat{\text{Dgm}}(\phi_n)]$ is a consistent estimator of $\text{mp}\left[\text{Dgm}(\phi)\right]$:
    \begin{equation}
        \widehat{\nabla} = \left| \text{mp}[\widehat{\text{Dgm}}(\phi_n)] - \text{mp}[\text{Dgm}(\phi)]\right| \xrightarrow[]{P} 0, \quad \text{as } n \xrightarrow{} \infty.
    \end{equation}
\end{theorem}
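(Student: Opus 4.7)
The plan is to chain together the two stability bounds already established in the paper and reduce the consistency claim to a uniform convergence of $\phi_n$ to $\phi$ in sup-norm. Concretely, I would first apply Lemma~\ref{lemma:maximal-persistence-stability}(ii) to write
\begin{equation*}
\widehat{\nabla} \;=\; \bigl|\operatorname{mp}[\widehat{\text{Dgm}}(\phi_n)]-\operatorname{mp}[\text{Dgm}(\phi)]\bigr| \;\le\; 2\,\text{d}_\text{B}\bigl(\widehat{\text{Dgm}}(\phi_n),\text{Dgm}(\phi)\bigr),
\end{equation*}
and then apply the functional stability bound in Equation~\eqref{eqn:bottleneck-stability-functions}, which is available because $\phi$ is tame by Assumption~\ref{assump:density-smothness}, to conclude that
\begin{equation*}
\widehat{\nabla}\;\le\; 2\,\|\phi_n-\phi\|_\infty.
\end{equation*}
It then suffices to prove $\|\phi_n-\phi\|_\infty\xrightarrow{P}0$.

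The role of the two critical-point hypotheses is to ensure that this sup-norm control actually descends to the maximal-persistence feature. The separation condition $2\|\phi_n-\phi\|_\infty\le \tfrac{1}{2}\min_{i\ne j}|\phi(c_i)-\phi(c_j)|$ together with the displacement bound $\max_i|\phi_n(c_i^n)-\phi(c_i)|\le \tfrac{1}{2}\min_{i\ne j}|\phi(c_i)-\phi(c_j)|-\|\phi_n-\phi\|_\infty$ guarantees that for $n$ large enough every critical value of $\phi_n$ is uniquely matched to a critical value of $\phi$. This rules out feature merging or swapping along the bottleneck matching, so the feature realizing $\operatorname{mp}[\widehat{\text{Dgm}}(\phi_n)]$ is paired with the feature realizing $\operatorname{mp}[\text{Dgm}(\phi)]$, making the stability bound tight where needed.

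The main step is then showing $\|\phi_n-\phi\|_\infty\xrightarrow{P}0$. The plan is a two-term triangle inequality. Let $\widetilde{f}_{\sigma,\lambda}$ denote the target density of the rejection sampler (the renormalized restriction of $\widehat{f}_\sigma$ to $\{\widehat{f}_\sigma\ge\lambda\}$). Conditional on $\widehat{f}_\sigma$ the points of $\mathbb{X}_n^*$ are i.i.d.\ from $\widetilde{f}_{\sigma,\lambda}$, so by a Giné--Guillou-type uniform law for kernel density estimators, under the bandwidth regime $\sigma\to 0$ with $n\sigma^d/\log n\to\infty$ used throughout the paper,
\begin{equation*}
\|\phi_n-\widetilde{f}_{\sigma,\lambda}\|_\infty\;\xrightarrow{a.s.}\;0.
\end{equation*}
For the bias piece $\|\widetilde{f}_{\sigma,\lambda}-\phi\|_\infty$, I would use classical uniform KDE consistency of $\widehat{f}_\sigma$ for $\phi$ together with the positive-lower-bound condition $\inf_\mathcal{X}f\ge \underline{b}$ from Assumption~\ref{assump:density-smothness}, so that for $\lambda$ sufficiently small the thresholding is inactive on $\mathbb{X}$ with high probability, and the renormalization constant converges to one. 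Combining the two terms yields $\|\phi_n-\phi\|_\infty\xrightarrow{P}0$ and hence $\widehat{\nabla}\xrightarrow{P}0$.

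The main obstacle is the second part of this last step: the target of the rejection sampler is itself a random function $\widehat{f}_\sigma$, so a naive appeal to a fixed-density uniform law does not apply directly. I would address this either by conditioning on $\widehat{f}_\sigma$ and invoking uniform laws for triangular arrays of empirical processes, or by leveraging Theorem~\ref{thm:consistency-subsamples} to first anchor $\mathbb{X}_n^*$ Hausdorff-close to $\mathbb{X}$ and then running the KDE consistency argument on these anchored samples. The tameness of $\phi$ together with the regularity of the support provided by Assumption~\ref{assump:density-smothness} is what makes this second route tractable; once the uniform bound $\|\phi_n-\phi\|_\infty\xrightarrow{P}0$ is secured, the rest of the argument is immediate from the two stability bounds above.
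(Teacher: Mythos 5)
Your proposal follows essentially the same route as the paper's proof: reduce $\widehat{\nabla}$ via Lemma~\ref{lemma:maximal-persistence-stability} and the bottleneck stability bound for tame functions to control by $\|\phi_n-\phi\|_\infty$ (the paper phrases this through the matched critical values, obtaining $\widehat{\nabla}\le\max_i|\phi_n(c_i^n)-\phi(c_i)|$, but the reduction and the role of the two separation hypotheses are the same), and then conclude by uniform KDE consistency applied to $\phi_n = f_{\mathbb{X}_n^*}\ast K_{\sigma_2}$ with the thresholded, renormalized KDE as the sampling target. Your treatment of the final step is in fact more explicit than the paper's, which simply cites standard consistency results for the regime $\sigma_1,\sigma_2\to 0$, $n\to\infty$; the two-term decomposition and the caveat about the randomness of the rejection-sampling target are details the paper leaves implicit rather than a different argument.
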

\begin{proof}
    The proof follows from the regular consistency results on kernel density estimation and the critical distances lemma in \cite{devroye2001combinatorial} and \cite{fasy2018robust}.
    By the bottleneck stability theory, we have that $\text{d}_{\text{B}}(\text{Dgm}(\phi_n), \text{Dgm}(\phi)) \le || \phi_n - \phi ||_\infty$. Note that for the upper-level sets filtration of these functions, the homology only changes at the critical points. Assume that $(\phi(c_i), \phi(c_j)) \in \text{Dgm}(\phi)$ and $(\phi(c^n_i), \phi(c^n_j)) \in \text{Dgm}(\phi_n)$. Let $\gamma: \text{Dgm}(\phi_n) \xrightarrow[]{} \text{Dgm}(\phi)$ be the optimal bottleneck matching between the two diagrams. Under the assumption that 
    $\max_{i}| \phi_n(c_i^n) - \phi(c_i)| \le \frac{1}{2} \min_{i\ne j} | \phi(c_i) - \phi(c_j) | - || \phi_n - \phi ||_\infty$, which implies 
    $\min_{i\ne j} | \phi(c_i) - \phi(c_j) | - \max_{i}| \phi_n(c_i^n) - \phi(c_i)| \ge \max_{i}| \phi_n(c_i^n) - \phi(c_i)| + 2|| \phi_n - \phi ||_\infty$, it follows that $\gamma(\phi(c^n_i), \phi(c^n_j)) = (\phi(c_i), \phi(c_j))$. By Lemma~\ref{lemma:maximal-persistence-stability}, we have that $\widehat{\nabla} \le \max_{i}| \phi_n(c_i^n) - \phi(c_i)|$. 
    Define $\phi_n = \widehat{f}_{\sigma_2}$, and let $\Tilde{f}_{\sigma_1}$ be the KDE on $\mathbb{X}_n$ and $L_\lambda = \{\mathbf{x}: \Tilde{f}_{\sigma_1}(\mathbf{x}) > \lambda\}$. Observe that $\widehat{f}_{\sigma_2} = \left( f_{\mathbb{X}_n^*} \ast K_{\sigma_2}\right)$ where $(\cdot\ast\cdot)$ denotes the convolution operation, and $f_{\mathbb{X}_n^*}(\mathbf{x}) \propto \frac{ \Tilde{f}_{\sigma_1}(\mathbf{x})\mathbbm{1}(\mathbf{x} \in L_\lambda)} {\int_{L_\lambda} \Tilde{f}_{\sigma_1}(\mathbf{y})dy}$. The conclusion follows from the regular consistency assumption on the bandwidths $\sigma_1, \sigma_2\xrightarrow{} 0$ and sample size $n \xrightarrow{} \infty$ of the KDE \citep{devroye2001combinatorial}.
\end{proof}
\noindent
Next, we describe the framework for assessing the statistical significance of the maximal persistence features via a Monte-Carlo procedure.

\subsubsection{Construction of confidence sets}
It is common to consider homology features with longer persistence as topological signal \citep{fasy2014confidence}. Thus ${H}_{k > 0}$ features with longer life spans can be interpreted as being more statistically significant than those with shorter life spans. For example, in time series analysis, one method for determining periodicity examines the persistence of loops in a TDE space. A perfectly circular loop suggests an underlying periodic signal, and \cite{perea2015sw1pers} proposes estimating its period using a function of $\text{mp}[\text{Dgm}]$, though a method for quantifying the statistical significance of this estimate was not established. The proposed MaxTDA framework addresses this gap by providing tools to determine the statistical significance of such a periodicity measure.

Methods for estimating the statistical significance of the features through confidence sets were discussed in \cite{fasy2014confidence}. However, these methods bound the bottleneck distance with the Hausdorff distance or distances of functions defined on the data space, which shifts the randomness in the construction to the original data space. These bounds are not tight in many cases (e.g., \citealt{fasy2018robust,glenn2024confidence}). A method that restricts the randomness to the persistence diagrams, by directly bootstrapping the bottleneck distance was introduced in \cite{fasy2018robust}. We first describe the process for constructing confidence sets for the features with maximal persistence on the persistence diagram, which in our case amount to lower bounds for the maximal persistence.

Given significance level $\alpha \in (0, 1)$, the goal is to find $t_{\alpha}$ such that: $\text{Pr}(\text{d}_{\text{B}}(\widehat{\text{Dgm}}, \text{Dgm}) > t_\alpha) \le \alpha$ as $n \xrightarrow[]{} \infty$.
The confidence set on a persistence diagram can be constructed by considering points in $\widehat{\text{Dgm}}$ whose distance to the diagonal exceeds $t_\alpha$, $\left\{ (b, d) \in\widehat{\text{Dgm}}: |d-b| > 2t_\alpha  \right\}$.
%
%
This construction extends to the maximal persistence estimator through the relation:
\begin{equation}
    \text{Pr}(\widehat{\nabla} > 2t_\alpha) \le \text{Pr}(\text{d}_{\text{B}}(\widehat{\text{Dgm}}, \text{Dgm}) > t_\alpha) \le \alpha.
\end{equation}
There are two ways to visualize this confidence set on $\widehat{\text{Dgm} }\subset \mathbb{R}^2$. The first is to draw $\text{d}_{\text{B}}$-balls with  side length of $2t_\alpha$ centered on each point in $\widehat{\text{Dgm}}$. Then using the closeness to the diagonal, a point is considered to be be a topological noise if its $\text{d}_{\text{B}}$-ball intersects with the diagonal line. The second and equivalent option is to add a diagonal band (rejection band) of width $\sqrt{2}t_\alpha$ to $\widehat{\text{Dgm}}$, and points in $\widehat{\text{Dgm}}$ that falls within this band are elements of $ \left\{ (b, d) \in\widehat{\text{Dgm}}: |d-b| \le 2t_\alpha  \right\}$, a rejection set, and are deemed to not be statistically significant at the $\alpha$ significance level. Note that the rejection band is constructed individually for each homology dimension. The $t_\alpha$ can be estimated via a Monte-Carlo procedure described in the next section.

\subsubsection{Monte-Carlo estimation procedure}
In this section, a Monte-Carlo procedure is proposed to estimate the $t_\alpha$. Draw the sample $\mathbb{X}_n^{*(b)}$ using  Algorithm~\ref{alg:subsampling} as follows: first, take a bootstrap sample $\mathbb{X}_n^{(b)}$ from the original sample $\mathbb{X}_n$. Using Algorithm~\ref{alg:subsampling}, generate $\mathbb{X}_n^{*(b)}$ with $\mathbb{X}_n^{(b)}$ as the underlying observed sample. 
Let $\phi_n^{(b)}$ be the function associated with the sample $\mathbb{X}_n^{*(b)}$, and $\phi_n$ to be the function associated with the quantity $\mathbb{X}_n^{*}$, obtained by applying Algorithm~\ref{alg:subsampling} to $\mathbb{X}_n$. Compute the empirical quantity $\hat{t}^{(b)} = \text{d}_{\text{B}}(\widehat{\text{Dgm}}(\phi_n^{(b)}), \widehat{\text{Dgm}}(\phi_n) )$.
This process is repeated $b=1, \cdots, N$ times. Let $\widehat{\text{F}}_n$ be the empirical distribution function of this set of observations: $\{ \hat{t}^{(b)}: b=1, ..., N\}$. Let $\hat{t}_\alpha$ be the $1-\alpha$ quantile of $\widehat{\text{F}}_n$. Also let ${\text{F}}_n$ be the distribution function of the quantity $\text{d}_{\text{B}}(\widehat{\text{Dgm}}(\phi_n), {\text{Dgm}}(\phi) )$. Then the following result holds.
\begin{lemma}
The $1-\alpha$ quantile $\hat{t}_\alpha$ is a consistent estimator of $t_\alpha$, that is, 
    $\sup_t |\widehat{\text{F}}_n(t) - {\text{F}}_n(t)| \xrightarrow[]{p} 0.$
    \label{eqn:consistency-t-alpha}
\end{lemma}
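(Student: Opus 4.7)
The plan is to decompose $\sup_t |\widehat{\text{F}}_n(t) - \text{F}_n(t)|$ through the intermediate conditional distribution $\text{G}_n(t \mid \mathbb{X}_n) = \text{Pr}\!\left(\hat{t}^{(b)} \le t \mid \mathbb{X}_n\right)$ of the bootstrap statistic. The triangle inequality gives
\begin{equation*}
\sup_t |\widehat{\text{F}}_n(t) - \text{F}_n(t)| \;\le\; \sup_t |\widehat{\text{F}}_n(t) - \text{G}_n(t \mid \mathbb{X}_n)| \;+\; \sup_t |\text{G}_n(t \mid \mathbb{X}_n) - \text{F}_n(t)|.
\end{equation*}
The first term is a pure Monte-Carlo error: since the bootstrap replicates $\hat{t}^{(1)},\dots,\hat{t}^{(N)}$ are i.i.d.\ draws from $\text{G}_n(\cdot \mid \mathbb{X}_n)$ given the data, the Dvoretzky--Kiefer--Wolfowitz inequality drives this term to zero as $N \to \infty$. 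The real content is thus controlling the second term, i.e.\ the distance between the bootstrap distribution of $\text{d}_\text{B}(\widehat{\text{Dgm}}(\phi_n^{(b)}), \widehat{\text{Dgm}}(\phi_n))$ and the sampling distribution of $\text{d}_\text{B}(\widehat{\text{Dgm}}(\phi_n), \text{Dgm}(\phi))$.

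For this, I would exploit the bottleneck stability bound from Equation~\eqref{eqn:bottleneck-stability-functions}: $\hat{t}^{(b)} \le \|\phi_n^{(b)} - \phi_n\|_\infty$ and $\text{d}_\text{B}(\widehat{\text{Dgm}}(\phi_n), \text{Dgm}(\phi)) \le \|\phi_n - \phi\|_\infty$, so both statistics are images of KDE sup-norm residuals under the $1$-Lipschitz functional $\psi \mapsto \text{d}_\text{B}(\text{Dgm}(\phi_n + \psi), \text{Dgm}(\phi_n))$. Classical bootstrap consistency for the KDE in sup-norm (Gin\'e--Nickl and Bickel--Rosenblatt type Gaussian approximations, applicable under the bandwidth regime and density regularity in Assumption~\ref{assump:density-smothness} together with the conditions of Theorem~\ref{thm:consistency-subsamples}) yields
\begin{equation*}
\sup_t \left| \text{Pr}(\|\phi_n^{(b)} - \phi_n\|_\infty \le t \mid \mathbb{X}_n) - \text{Pr}(\|\phi_n - \phi\|_\infty \le t) \right| \xrightarrow[]{P} 0.
\end{equation*}
Continuous mapping through the $1$-Lipschitz bottleneck functional, together with Polya's theorem (the limiting bootstrap distribution is continuous because its KDE Gaussian-process analogue has no atoms on the interior of its support), then transfers this uniform control from sup-norm residuals to bottleneck distances, giving $\sup_t |\text{G}_n(t \mid \mathbb{X}_n) - \text{F}_n(t)| \xrightarrow[]{P} 0$.

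The main obstacle is that bottleneck stability is only a one-sided inequality: $\text{d}_\text{B}$ is dominated by but not equal to the sup-norm, so bootstrap consistency of the KDE residual does not automatically entail equality of the two distributions of interest. Two routes resolve this. The tight route is to argue via a common Gaussian-process limit: under the smoothness assumptions, both $\|\phi_n - \phi\|_\infty$ and $\text{d}_\text{B}(\widehat{\text{Dgm}}(\phi_n), \text{Dgm}(\phi))$ are continuous functionals of the same normalized KDE process $\sqrt{n}(\phi_n - \phi)$, whose bootstrap counterpart converges to the same limit, and the joint continuous mapping yields matching distributions. The weaker but simpler route is to accept the one-sided dominance, which still yields $\widehat{\text{F}}_n \le \text{F}_n$ asymptotically and hence a (possibly conservative) consistent quantile; since the displayed claim concerns a uniform probability-limit, the conservative bound and the tight route converge to the same conclusion as the bandwidths $\sigma_1, \sigma_2 \to 0$ and $n \to \infty$.
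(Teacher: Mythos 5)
Your decomposition into a Monte--Carlo term (handled by Dvoretzky--Kiefer--Wolfowitz conditionally on the data) plus a bootstrap-consistency term is a sensible skeleton, and it is already more than the paper provides: the paper's entire proof is a one-line deferral to Theorem 19 and Corollary 20 of \cite{fasy2018robust}. So you are necessarily taking a different route. The problem is that your route has a genuine gap exactly at the step you yourself flag as ``the main obstacle,'' and neither of your two proposed resolutions closes it. The weaker route cannot work: a conservative quantile obtained from the one-sided bound $\text{d}_\text{B} \le \|\cdot\|_\infty$ establishes stochastic dominance, not the stated claim $\sup_t|\widehat{\text{F}}_n(t)-\text{F}_n(t)|\xrightarrow{p}0$, and the assertion that the conservative and tight routes ``converge to the same conclusion'' is unsupported --- in general the bottleneck distance is strictly smaller than the sup-norm with nonvanishing probability, so the two limiting distributions genuinely differ. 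The tight route as written also does not go through: convergence in distribution of $\|\phi_n-\phi\|_\infty$ (or even weak convergence of the normalized KDE process in $\ell^\infty$) does not transfer to $\text{d}_\text{B}(\widehat{\text{Dgm}}(\phi_n),\text{Dgm}(\phi))$ by continuous mapping, because the map $\psi\mapsto\text{d}_\text{B}(\text{Dgm}(\phi+\psi),\text{Dgm}(\phi))$, while $1$-Lipschitz, is not Hadamard differentiable at $0$ in general (it has the same max/inf pathology as an argmax functional), so neither the CMT applied to a fixed limit nor the functional delta method applies without additional structure. There is also a normalization issue: with $\sigma\to 0$ as in Theorem~\ref{thm:consistency-subsamples}, the relevant rate is $\sqrt{n\sigma^d}$, not $\sqrt{n}$, and a bias term must be controlled.

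The missing idea --- and the one that makes the cited result in \cite{fasy2018robust} work --- is the critical-value separation condition that this paper already imposes in its Consistency theorem in Section~\ref{sec:signifiance-max-persistence}. Under the assumption that $2\|\phi_n-\phi\|_\infty \le \tfrac12\min_{i\ne j}|\phi(c_i)-\phi(c_j)|$ and that the critical values of $\phi_n$ track those of $\phi$, the optimal bottleneck matching pairs each off-diagonal point of $\widehat{\text{Dgm}}(\phi_n)$ with its counterpart in $\text{Dgm}(\phi)$, so that the bottleneck distance is not merely bounded by the sup-norm but \emph{equals} $\max_i|\phi_n(c_i^n)-\phi(c_i)|$, a maximum of finitely many smooth functionals of the empirical process. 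For such a functional, bootstrap consistency is standard (functional delta method for a finite maximum of Hadamard-differentiable maps), and the uniform convergence $\sup_t|\text{G}_n(t\mid\mathbb{X}_n)-\text{F}_n(t)|\xrightarrow{p}0$ follows together with Polya's theorem once the limit law is shown to be continuous. If you insert that identification in place of the one-sided stability bound, your decomposition becomes a correct proof; without it, the central step is an inequality pointing in only one direction.
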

\noindent
This result follows directly from Theorem 19 and Corollary 20 in \cite{fasy2018robust}. This process is used to determine the statistical significance of the maximally persistent $H_{k>0}$ features.

\begin{remark}
    The inference procedure developed in this work can be extended to additive or multiplicative transformations of $\widehat{\nabla}$. For example, the statistical significance of the normalized periodicity score $\text{mp}[\text{Dgm}]/\sqrt{3}$ can be derived through the distribution of $\widehat{\nabla}/\sqrt{3}$, which amounts to estimating the empirical quantile function $\sqrt{3}\hat{t}_\alpha$. These results can also be adapted for minimal persistence. 
\end{remark}


\section{Numerical Validations}
\label{sec:numerical-simulation}
This section presents numerical studies that demonstrate the performance of MaxTDA. First we show the quality of the topological recovery achieved by the proposed method in terms of the number of features recovered and the persistence of these features.  

\subsection{Quality of topological recovery} \label{subsec:topological-recovery}
The first numerical experiment aims to recover a densely sampled circle while treating a sparse circle as noise. The data consist of $50$ samples around a unit circle, $500$ samples around a radius-$0.5$ circle (both perturbed by N($0$, $\sqrt{0.05}$)), and $450$ uniform samples in $[-1, 1]^2$. These three samples give $\mathbb{X}_n$ with $n=1000$. The VR, DTM, and KDE persistence diagrams were computed, with DTM parameter $m=0.9$ chosen over a grid of points in the interval $(0, 1)$, and the KDE bandwidth set at $0.1$.  A complete comparison across various bandwidths is given in Section~\ref{subsec:pac-min-loss}. The point cloud $\mathbb{X}_n$ and persistence diagrams are displayed in Figure \ref{fig:overlap-annulus}. While the VR diagram is noisy, the DTM and KDE diagrams suppress low-persistence features, though at the cost of reduced persistence.

\begin{figure}[t!]
     \centering
     \begin{subfigure}[b]{0.24\textwidth}
         \centering
         \includegraphics[width=\textwidth]{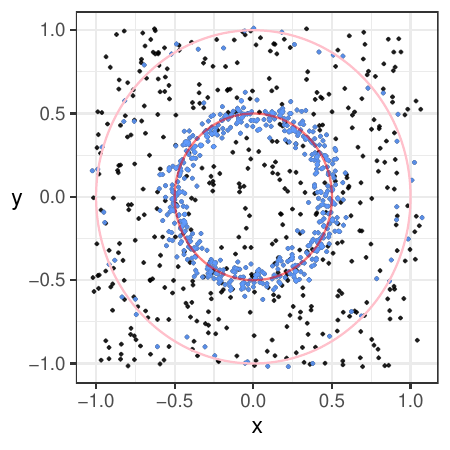}
         \caption{Point cloud $\mathbb{X}_n$}
         \label{fig:overlap_annulus_noisy}
     \end{subfigure}
     \hfill
     \begin{subfigure}[b]{0.24\textwidth}
         \centering
         \includegraphics[width=\textwidth]{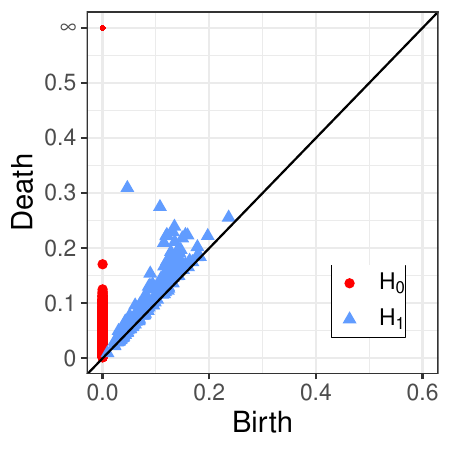}
         \caption{VR diagram}
         \label{fig:vr_pd_overlap_annulus_noisy}
     \end{subfigure}
     \hfill
     \begin{subfigure}[b]{0.24\textwidth}
         \centering
         \includegraphics[width=\textwidth]{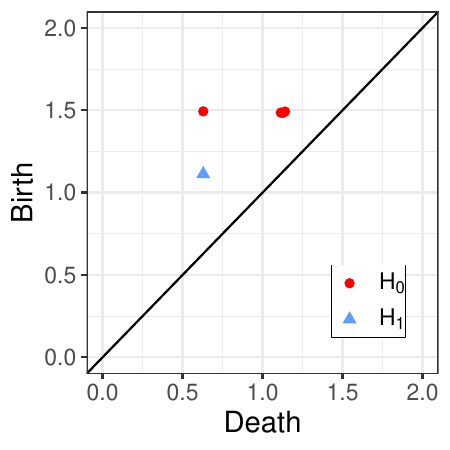}
         \caption{DTM diagram}
         \label{fig:dtm_pd_overlap_annulus_noisy}
     \end{subfigure}
     \hfill
     \begin{subfigure}[b]{0.24\textwidth}
         \centering
         \includegraphics[width=\textwidth]{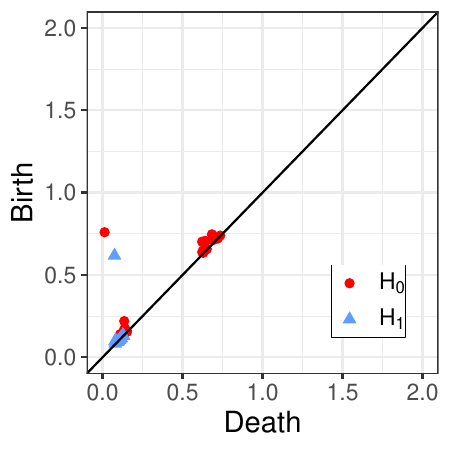}
         \caption{KDE diagram}
         \label{fig:kde_pd_overlap_annulus_noisy}
     \end{subfigure}
    \caption{An illustration of the VR (b), DTM (c), and KDE (d) filtration on the point cloud $\mathbb{X}_n$ (a) (the blue points are signal and the black points are noise).  All three methods identified one dominant $H_1$ feature in terms of persistence. }
    \label{fig:overlap-annulus}
\end{figure}

To demonstrate the topological recovery, $\mathbb{X}^*_n$ was constructed using Algorithm~\ref{alg:subsampling} with a threshold $\lambda$ selected from the range $[0.1, 1]$ and a KDE bandwidth set to the average $k$-NN distance ($k \in [1, 10]$) of points in $\mathbb{X}_n$. These parameters were chosen using the procedure in Section~\ref{sec:parameter-selection} to maximize the most persistent $H_1$ feature for each filtration scheme. Specifically, the optimal $(\lambda, k)$ are $(0.7, 10)$, $(0.4, 2)$, and $(0.6, 8)$ for VR, DTM, and KDE filtration, respectively.  The KDE sample space is shown in Figure~\ref{fig:overlap_annulus_sample_space}, along with the VR (\ref{fig:vr_pd_overlap_annulus_smoothed}), DTM (\ref{fig:dtm_pd_overlap_annulus_smoothed}), and KDE (\ref{fig:kde_pd_overlap_annulus_smoothed}) persistence diagrams computed for $\mathbb{X}_n^*$.
\begin{figure}
     \centering
     \begin{subfigure}[b]{0.24\textwidth}
         \centering
         \includegraphics[width=\textwidth]{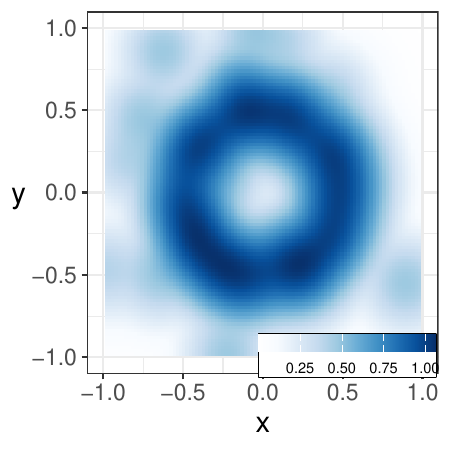}
         \caption{Sample space of $\mathbb{X}_n^*$}
         \label{fig:overlap_annulus_sample_space}
     \end{subfigure}
     \hfill
     \begin{subfigure}[b]{0.24\textwidth}
         \centering
         \includegraphics[width=\textwidth]{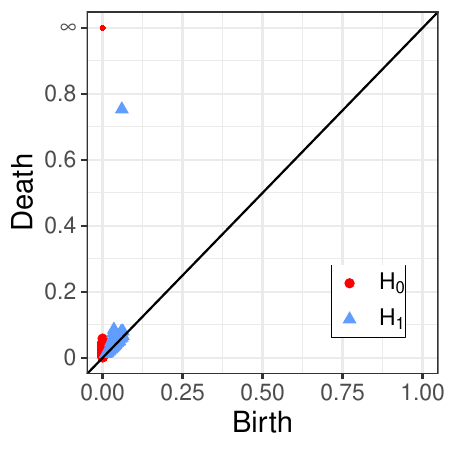}
         \caption{VR diagram}
         \label{fig:vr_pd_overlap_annulus_smoothed}
     \end{subfigure}
     \hfill
     \begin{subfigure}[b]{0.24\textwidth}
         \centering
         \includegraphics[width=\textwidth]{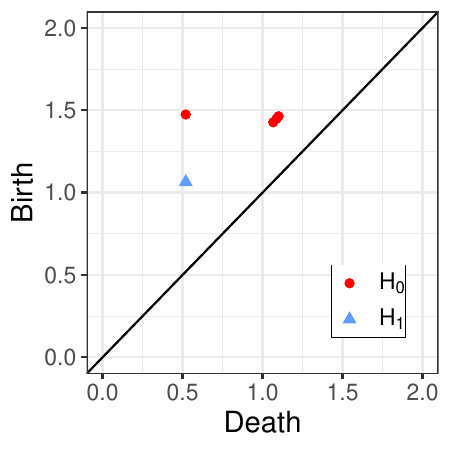}
         \caption{DTM diagram}
         \label{fig:dtm_pd_overlap_annulus_smoothed}
     \end{subfigure}
     \hfill
     \begin{subfigure}[b]{0.24\textwidth}
         \centering
         \includegraphics[width=\textwidth]{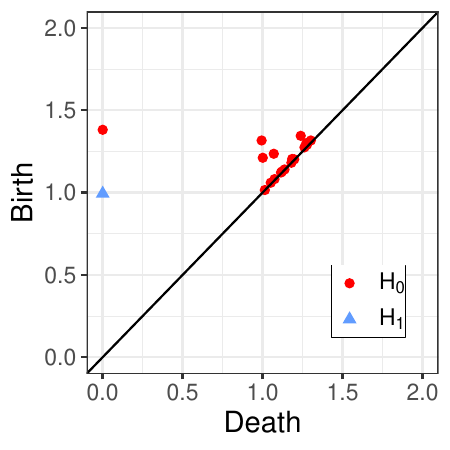}
         \caption{KDE diagram}
         \label{fig:kde_pd_overlap_annulus_smoothed}
     \end{subfigure}
    \caption{An illustration of the VR (b), DTM (c), and KDE (d) filtration on the point cloud $\mathbb{X}_n^*$ from Algorithm~\ref{alg:subsampling}. All three methods identified one dominant and enhanced $H_1$ feature.}
    \label{fig:overlap-annulus-smoothed}
\end{figure}
All three methods revealed one dominant $H_1$ feature. The diagrams from $\mathbb{X}_n^*$ contains fewer low-persistence $H_1$ features than those from $\mathbb{X}_n$.

\subsubsection{Reducing persistence loss in prominent features} \label{subsec:pac-min-loss}

Next, we demonstrate that Algorithm~\ref{alg:subsampling} is robust to variations in the KDE bandwidth (and in the DTM smoothing parameter), that is, once appropriate parameters are selected for Algorithm~\ref{alg:subsampling}, these choices remain effective across different values of $\sigma$ or $m$. Let $\mathbb{X}_n$ be the noisy sample in Figure~\ref{fig:overlap_annulus_noisy}, and $\mathbb{X}_{n,\lambda}^*$ and $\mathbb{X}_{n,0}^*$ be the thresholded and non-thresholded versions, respectively. Let $\mathbb{X}$ be the noise-free data, depicted as the blue points in Figure~\ref{fig:overlap_annulus_noisy}. Figure~\ref{fig:overlap_annulus_max_pers} shows that for a single sample, it is possible to appropriately choose the parameters of Algorithm~\ref{alg:subsampling} (in this case, $\lambda = 0.6$ and the KDE bandwidth is the average $8$-NN distance) such that the maximal persistence associated with $\mathbb{X}_{n, \lambda}^*$ closely approximates the ground truth ($\mathbb{X}$) maximal persistence.

\begin{figure}[ht!]
     \centering
     \begin{subfigure}[b]{0.49\textwidth}
         \centering
         \includegraphics[width=\textwidth]{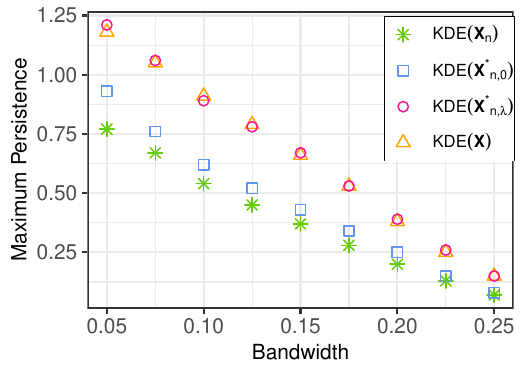}
         \caption{Maximum persistence for a single sample.}
         \label{fig:overlap_annulus_max_pers}
     \end{subfigure}
     \hfill
     \begin{subfigure}[b]{0.49\textwidth}
         \centering
         \includegraphics[width=\textwidth]{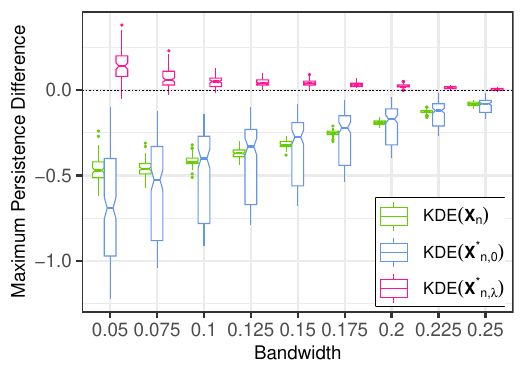}
         \caption{Distribution of the maximum persistence.}
         \label{fig:overlap_annulus_max_pers_diff}
     \end{subfigure}
        \caption{
        MaxTDA estimation results. (a) For an appropriately chosen threshold, the maximal persistence associated with the MaxTDA $\mathbb{X}_{n, \lambda}^*$ (red circles) closely approximates the ground truth ($\mathbb{X}$) maximal persistence (orange triangles). (b) The distribution of the difference in maximal persistence between the three data samples and the ground truth across 100 independent trials, demonstrating that $\mathbb{X}_{n, \lambda}^*$ (red) maximal persistence is less biased. 
        } 
        \label{fig:overlap_annulus_max_pers_}
\end{figure}
The process is repeated $100$ independent times to assess the variability of the construction. The results are presented in Figure~\ref{fig:overlap_annulus_max_pers_diff} as boxplots of the differences between the true and estimated maximal persistence, which indicate that the distributions of the MaxTDA $\mathbb{X}_{n,\lambda}^*$ maximal persistence values are closer to the true values than those from other data spaces. 

\subsection{Data with varying sampling distributions} \label{subsec:sampling-variability}

This section demonstrates how MaxTDA applies to data from topological spaces with similar geometries but different sampling distributions, a scenario that arises, for example, in signal processing when signals at different frequencies are embedded in the same space. Figure~\ref{fig:sampling_variability_point_cloud} shows a $3$D point cloud $\mathbb{X}_n$ with four ellipses of varying density; the goal is to recover the denser ellipse as the ground truth by isolating a single maximally persistent $H_1$ feature. Using the parameter selection procedure in Section~\ref{sec:parameter-selection}, a density threshold of $\lambda=12.22$ was obtained, and a KDE bandwidth was determined as the average 1-NN distance of points in $\mathbb{X}_n$. $\mathbb{X}_{n, \lambda}^*$ and ${\mathbb{X}}_{n, 0}^*$ were constructed using Algorithm~\ref{alg:subsampling}. Figure~\ref{fig:threshold_relxation_max_pers_diff} shows that for bandwidths up to $0.025$, the maximal persistence of KDE($\mathbb{X}_{n, \lambda}^*$) exceeds that of KDE$({\mathbb{X}}_{n, 0}^*)$ and KDE$(\mathbb{X}_n)$. These bandwidths correspond to when the dense ellipse's persistence is at its maximum and increasing, whereas for larger bandwidths the persistence decreases due to over-smoothing, indicating undesirable bandwidths.
\begin{figure}[ht!]
     \centering
     \begin{subfigure}[b]{0.4\textwidth}
         \centering
         \includegraphics[width=\textwidth]{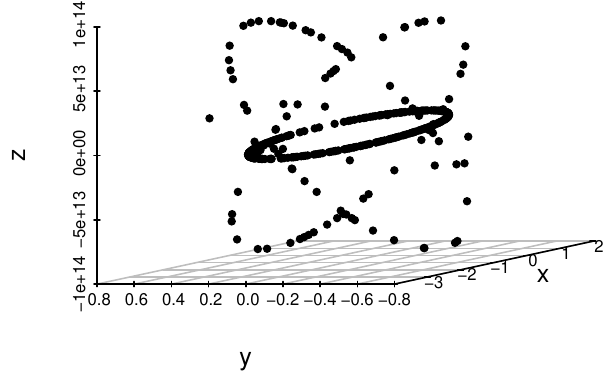}
         \caption{Data $\mathbb{X}_n$}
         \label{fig:sampling_variability_point_cloud}
     \end{subfigure}
     \hfill
     \begin{subfigure}[b]{0.59\textwidth}
         \centering
         \includegraphics[width=\textwidth]{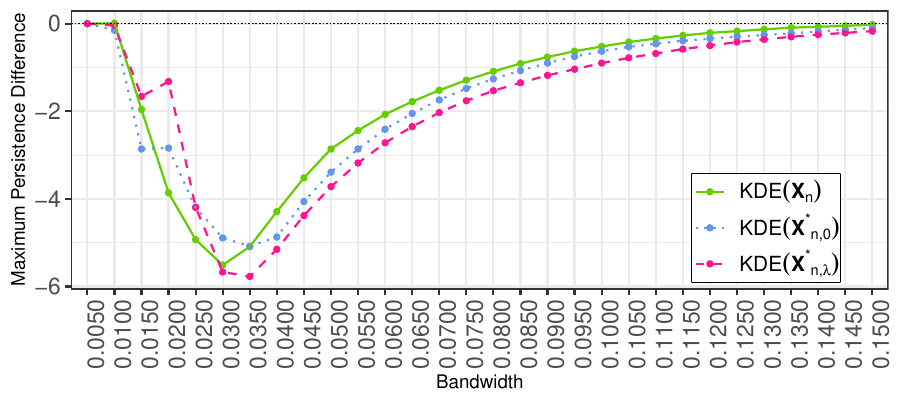}
         \caption{Differences in maximal persistence by bandwidth}
         \label{fig:threshold_relxation_max_pers_diff}
     \end{subfigure}
    \caption{Performance of MaxTDA in estimating the maximal persistence using the sample $\mathbb{X}_{n, \lambda}^*$ compared to the original data $\mathbb{X}_n$ and the non-thresholded sample ${\mathbb{X}}_{n, 0}^*$. (a) Data $\mathbb{X}_n$ with $n = 333$; (b) the difference in the maximal persistence from that of the dense ellipse by KDE bandwidth. } 
    \label{fig:ksampling_variability_threshold_relaxation}
\end{figure}

The optimal bandwidths that maximize the maximal persistence were determined to be $0.02$ for $\text{KDE}(\mathbb{X}_{n,\lambda}^*)$ (denoted hereafter as $\text{KDE}(\mathbb{X}_n^*)$) and $0.015$ for $\text{KDE}(\mathbb{X}_n)$, and these values were used to construct the final persistence diagrams, where the most persistent $H_1$ features in $\mathbb{X}_n$ and $\mathbb{X}_n^*$ had persistences of $1.45$ and $3.18$, respectively. To construct rejection bands, let $\widehat{\text{Dgm}}(\text{KDE}(\mathbb{X}_n))$ and $\widehat{\text{Dgm}}(\text{KDE}(\mathbb{X}_n^*))$ denote the respective persistence diagrams of the upper-level KDE filtrations of $\mathbb{X}_n$ and $\mathbb{X}_n^*$. We bootstrapped $\mathbb{X}_n$ $1000$ times and, for each bootstrap sample $\mathbb{X}_n^{(b)}$, estimated a KDE with $\sigma=0.015$ and computed the bottleneck distance $\hat{t}^{(b)}_{0.015}$ between the $H_1$ features of $\widehat{\text{Dgm}}(\text{KDE}(\mathbb{X}_n))$ and $\widehat{\text{Dgm}}(\text{KDE}(\mathbb{X}_n^{(b)}))$. We also computed the sample $\mathbb{X}_n^{*(b)}$ using Algorithm~\ref{alg:subsampling} at $\lambda=12.22$, with $\mathbb{X}_n^{(b)}$ as the underlying input data, estimated a KDE with $\sigma=0.02$ for $\mathbb{X}_n^{*(b)}$, and computed the bottleneck distance $\hat{t}^{(b)}_{0.02}$ between the $H_1$ features of $\widehat{\text{Dgm}}(\text{KDE}(\mathbb{X}_n^*))$ and $\widehat{\text{Dgm}}(\text{KDE}(\mathbb{X}_n^{*(b)}))$. The $0.95$ quantile of $\{\hat{t}^{(b)}_{0.015}\}$ was $1.3115$ for diagrams from $\mathbb{X}_n^{(b)}$, and that of $\{\hat{t}^{(b)}_{0.02}\}$ was $1.4195$ for diagrams from $\mathbb{X}_n^{*(b)}$, which were used to construct the rejection bands.
\begin{figure}[ht!]
     \centering
     \begin{subfigure}[b]{0.35\textwidth}
         \centering
         \includegraphics[width=\textwidth]{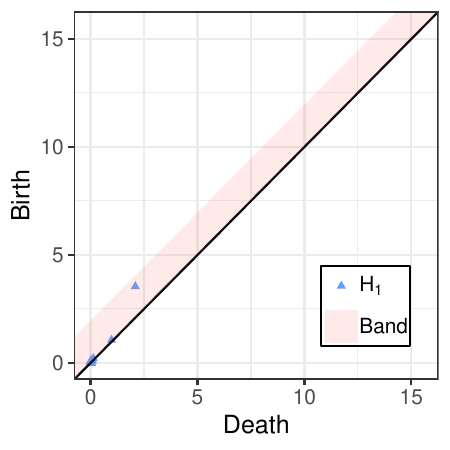}
         \caption{$\mathbb{X}_n^{(b)}$}
         \label{fig:kde_pd_conf_threshold_relaxation_noisy}
     \end{subfigure}
     \begin{subfigure}[b]{0.35\textwidth}
         \centering
         \includegraphics[width=\textwidth]{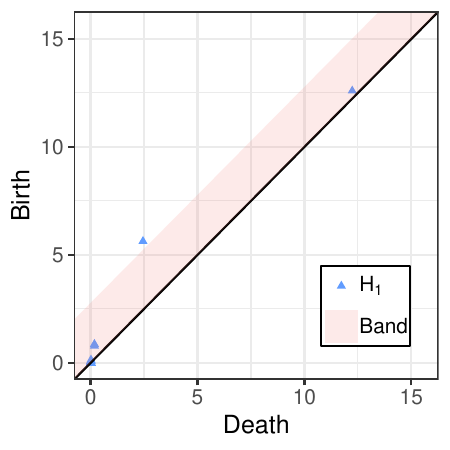}
         \caption{$\mathbb{X}_n^{*(b)}$}
         \label{fig:de_pd_conf_threshold_relaxation_thresholded}
     \end{subfigure}
        \caption{Illustration of the statistical significance of the $H_1$ features based on 1000 bootstrap samples from $\mathbb{X}_n^{(b)}$ (a) and $\mathbb{X}_n^{*(b)}$ (b). The displayed bands (light pink) indicated the 95\% rejection region for the $H_1$ features (blue triangles).  Note that the $H_0$ features have been omitted. } 
        \label{fig:confidence_sets}
\end{figure}
Figure~\ref{fig:kde_pd_conf_threshold_relaxation_noisy} shows the persistence diagram and $95\%$ rejection band for the ordinary KDE, where no $H_1$ feature is statistically significant, while Figure~\ref{fig:de_pd_conf_threshold_relaxation_thresholded} shows the persistence diagram and rejection band for the KDE of the smooth subsamples, in which one statistically significant $H_1$ feature corresponding to the denser elliptical sample is observed. This is partly due to its enhanced persistence, further highlighting the performance of the proposed MaxTDA method.

\section{Exoplanet Data Application} \label{sec:application}
This section explores how MaxTDA enhances periodic time series analysis by linking the persistence of $H_1$ features to signal periodicity. Enhancing the lifetime of $H_1$ features can strengthen a periodicity analysis. We begin by describing a method for constructing a time series representation.

\subsection{Time-delay embedding} \label{subsec:time-delay-embedding}
Time-delay embeddings (TDEs) provide a framework for transforming time series into a multi-dimensional representation \citep{takens2006detecting}. For time series $\{x(t): 0\le t\le n\}$, an embedding matrix is constructed where each row is given by: $\mathbf{v}(t) = \left[ x(t), x(t + \tau), \ldots, x(t + M\tau) \right]$, with time delay $\tau$ and $M+1$ delayed coordinates. Takens' Theorem guarantees that, under suitable conditions, this embedding preserves the shape of the underlying state space if the embedding dimension is sufficiently large \citep{takens2006detecting}. One method for determining $\tau$ is the average mutual information (AMI) \citep{fraser1986independent}. The AMI is computed by partitioning the range of the time series into bins: $\mathcal{I}(\tau) = \sum_{i, j} p_{i, j}(\tau)\log \left( \frac{p_{i, j}(\tau)}{p_ip_j} \right)$, where $p_i$ is the the probability the time series has a value in the $i$-th bin, and $p_j$ is the probability that $x(t+\tau)$ is in bin $j$,  and $p_{i, j}(\tau)$ denotes the probability that $x(t)$ and $x(t +\tau)$ are in the $i$-th and $j$-th bin, respectively. The smallest value of $\tau$ where $\mathcal{I}(\tau)$ reaches a local minimum is chosen as the optimal time delay step. This corresponds to the lag at which the redundancy of information between $x(t)$ and $x(t+\tau)$ is minimized, ensuring that points in the reconstructed embedding space are sufficiently independent. Once $\tau$ is determined, the embedding dimension $M+1$ is selected using Cao's method \citep{cao1997practical}, which evaluates how the structure of the reconstructed space changes as the embedding dimension increases. It identifies the dimension at which the reconstructed space stabilizes. TDEs remain valid under smooth linear transformations, such as principal component analysis (PCA), motivating our subsequent use of PCA for dimensionality reduction \citep{sauer1991embedology}.

\subsection{Exoplanet time series data} \label{sec:exoplanet}
Exoplanets are planets that orbit stars other than our sun.  One method for detecting exoplanets is the radial velocity (RV) method, which measures the forward and backward motion of a possible host star over time. This method was used to discover the first exoplanet orbiting a sun-like star \citep{mayor1995jupiter}. With this RV approach, a certain periodic signature in a star's RV over time suggests the presence of an orbiting exoplanet. The red line in  Figure~\ref{fig:exo-planet-signals} displays a simulated exoplanet RV signal on a circular orbit. Detecting low-mass exoplanets, such as Earth-like planets, remains challenging as their smaller signals can be obscured by stellar activity like star spots \citep{huelamo2008tw, dumusque2016radial, davis2017insights}. 
The green line in Figure~\ref{fig:exo-planet-signals} shows how a simulated star spot using the Spot Oscillation And Planet (SOAP) 2.0 code  \citep{dumusque2014soap} can induce a periodic RV signal that resembles an exoplanet.

While statistical techniques have been developed to detect exoplanets in the presence of stellar variability (e.g., \citealt{rajpaul2015gaussian, dumusque2018measuring, holzer2021hermite, holzer2021stellar, jones2022improving}), they do not fully mitigate the challenges \citep{zhao2022expres}. 
\begin{figure}[ht!]
    \centering
    \includegraphics[width=.6\textwidth,clip=true]{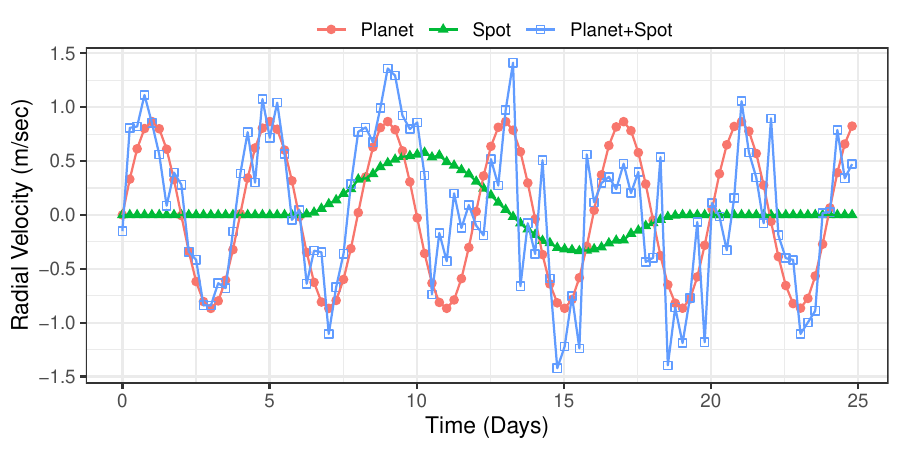}
    \caption{Exoplanet time-series data.  Simulated RV data of an exoplanet (red circles), a 0.05\% spot (green triangles), and the Planet+Spot combined (blue squares).}
    \label{fig:exo-planet-signals}
\end{figure}
This study demonstrates how MaxTDA can help identify and mitigate stellar variability in RV time series analysis; a complete analysis using real exoplanet data is the topic of future research. Our focus is on enhancing feature persistence in combined signals (e.g., Planet+Spot) and assessing the statistical significance of periodic behavior. Using simulated data (Figure~\ref{fig:exo-planet-signals}), we analyze a planet, a star spot, and their combined signal (P+S) RV time series. The spot-induced signal matches the star’s $25.05$-day rotation, while the planet orbits with a $4$-day period and $0.87$ m/sec semi-amplitude. A $0.05\%$ star spot at $30^\circ$ latitude induces a $0.58$ m/sec apparent RV signal. $N(0,1)$ noise was added to ensure the most persistent $H_1$ feature in the combined RV signal is close to the spot's $H_1$ feature before MaxTDA is applied.

TDE matrices were constructed for each time series, with AMI and Cao's used to select $(\tau=4, M=15)$ for the planet and $(\tau=12, M=7)$ for the spot.
\begin{figure}[t!]
    \centering
    \includegraphics[width=1\textwidth,clip=true]{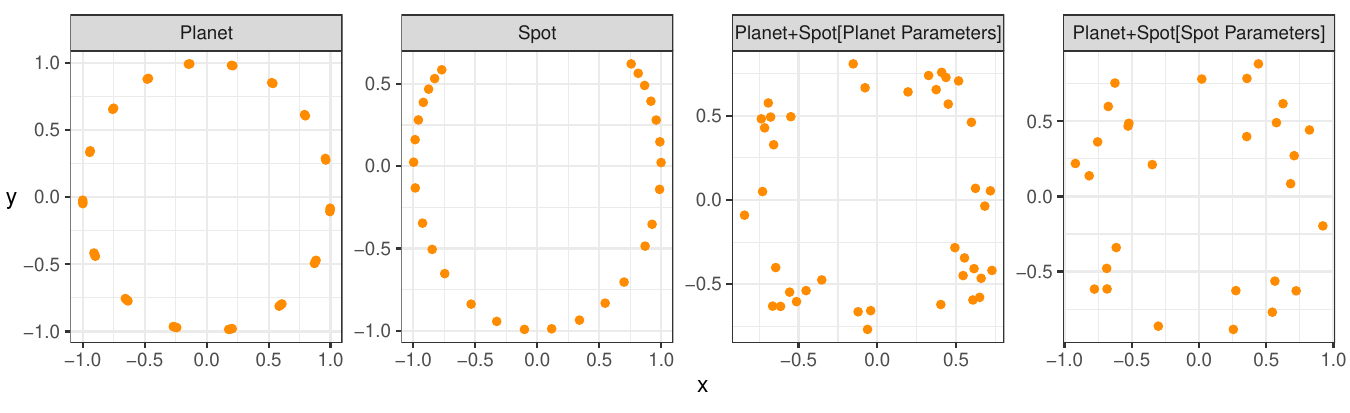}
    \caption{The embedded time series from Figure~\ref{fig:exo-planet-signals}. The Planet and P+S[Planet Parameters] used $\tau=4, M=15$, while the Spot and P+S[Spot Parameters] used $\tau=12, M=6$.}
    \label{fig:four_rv_embeddings}
\end{figure}
Instead of estimating new parameters for the combined signal, we applied the individual embeddings separately, allowing direct comparison of structural and temporal properties. This approach helps assess whether the time series geometry suggests a planet's presence. Each embedding matrix was centered, normalized, and reduced via PCA to two components for analysis (Figure~\ref{fig:four_rv_embeddings}).

\subsection{Quantifying periodicity}
The periodicity of a time series can be assessed using the $H_1$ features of its TDE, where periodic patterns form elliptical shapes in thstate space \citep{perea2015sw1pers}. The roundness of these ellipses, quantified by the maximum persistence of $H_1$ features, serves as a periodicity score: $\max_{(b, d) \in \widehat{Dgm}_1} |d - b|$.  For example, a time series that produces a well-sampled circular loop in its TDE will have high persistence and, therefore, a high periodicity score.

Algorithm~\ref{alg:subsampling} was applied to the P+S[Planet Parameters] and P+S[Spot Parameters] TDEs to reduce noise, with the optimal KDE bandwidth set as the average 1-NN distance. To construct rejection bands, a DTM filtration with $m = 0.01$  was used for the Planet, the P+S[Planet Parameters], the P+S[Spot Parameters], the Smooth P+S[Planet Parameters], the Smooth P+S[Spot Parameters] embeddings, and $m=0.05$ for the Spot embedding, which were selected to maximize the $H_1$ features. Figure~\ref{fig:dgm-signals-planetspot} display the persistence diagrams. The Planet signal has the highest periodicity score ($0.6647$), followed by smoothed P+S[Planet Parameters] ($0.4531$), both statistically significant at the $5\%$ level. The lack of significance in other embeddings is attributed to noise, data distribution variation, and the gap in the Spot's  embedding.
\begin{figure}[ht!]
     \centering
     \begin{subfigure}[b]{0.325\textwidth}
         \centering
         \includegraphics[width=\textwidth]{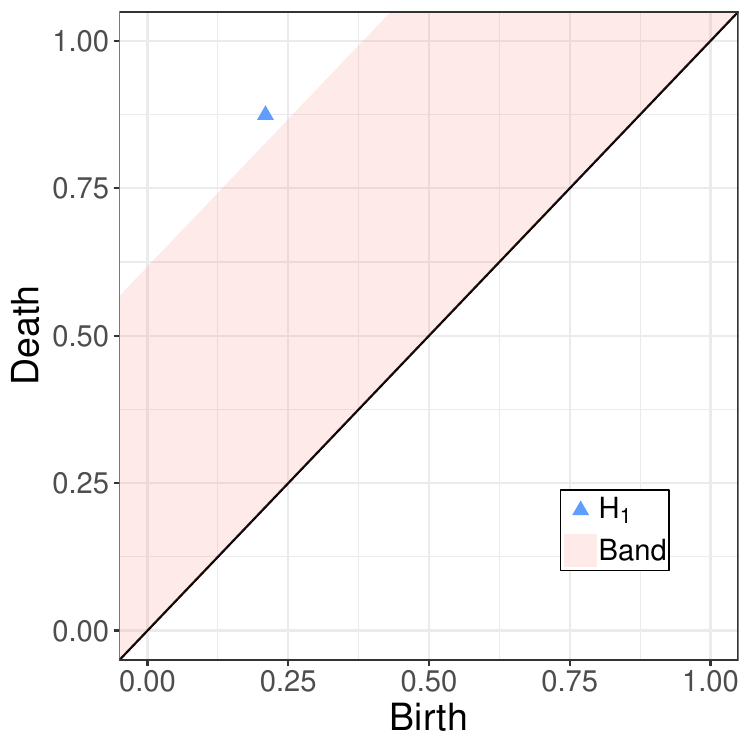}
         \caption{Planet}
         \label{fig:ssp_planet_h1}
     \end{subfigure}
     \begin{subfigure}[b]{0.325\textwidth}
         \centering
         \includegraphics[width=\textwidth]{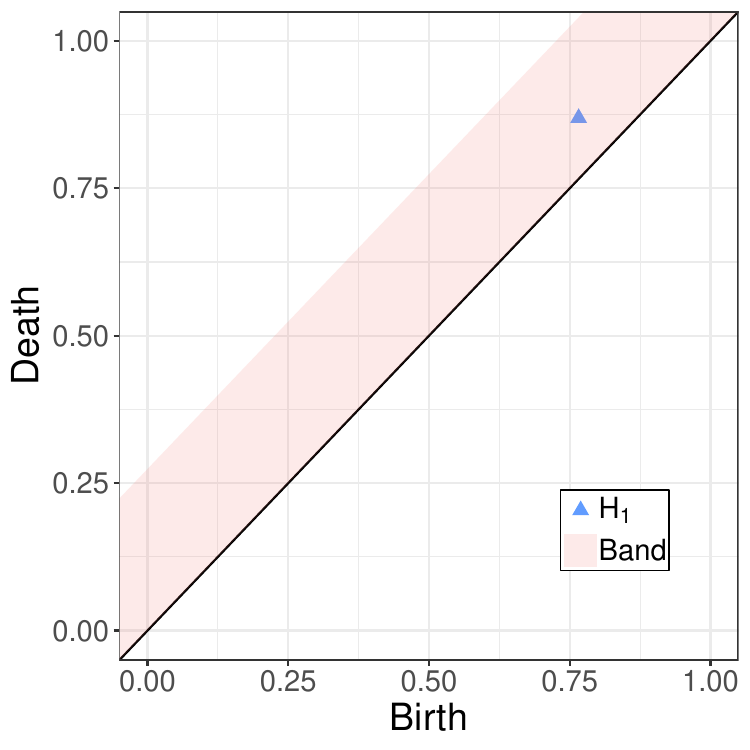}
         \caption{Spot}
         \label{fig:ssp_spot_h1}
     \end{subfigure}
     \begin{subfigure}[b]{0.325\textwidth}
         \centering
         \includegraphics[width=\textwidth]{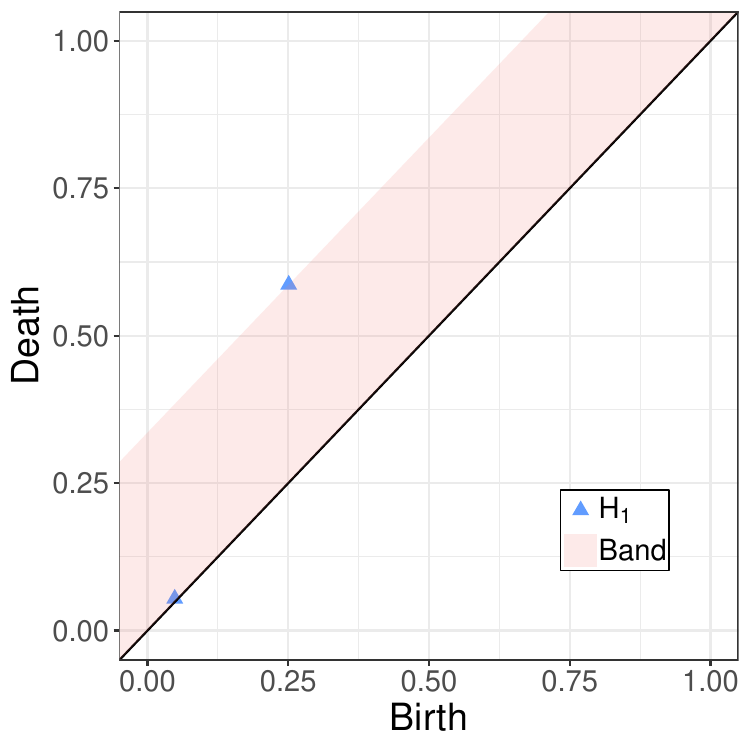}
         \caption{P+S[P]}
         \label{fig:ssp_planetspot1_h1}
     \end{subfigure}
     \hfill
     \begin{subfigure}[b]{0.325\textwidth}
         \centering
         \includegraphics[width=\textwidth]{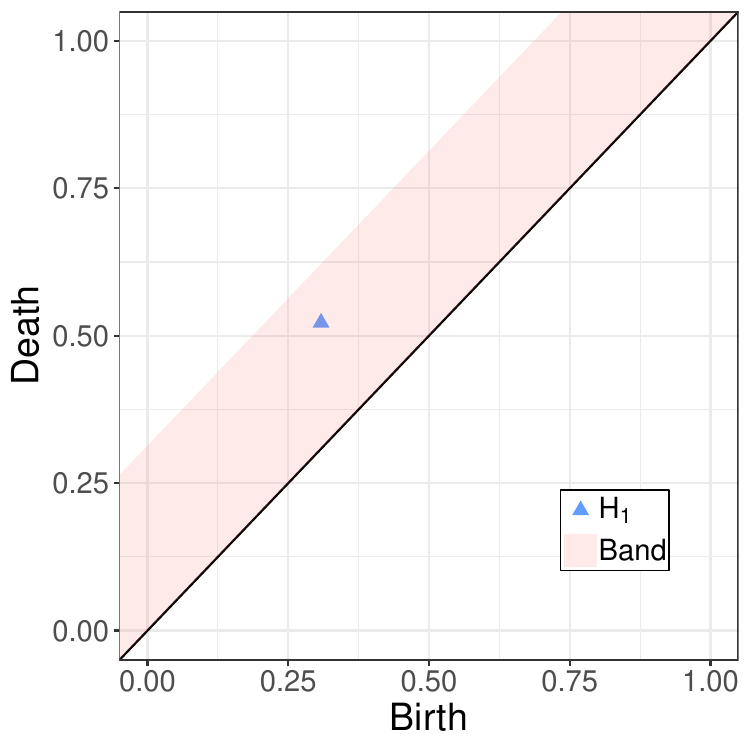}
         \caption{P+S[S]}
         \label{fig:ssp_planetspot2_h1}
     \end{subfigure}
     \begin{subfigure}[b]{0.325\textwidth}
         \centering
         \includegraphics[width=\textwidth]{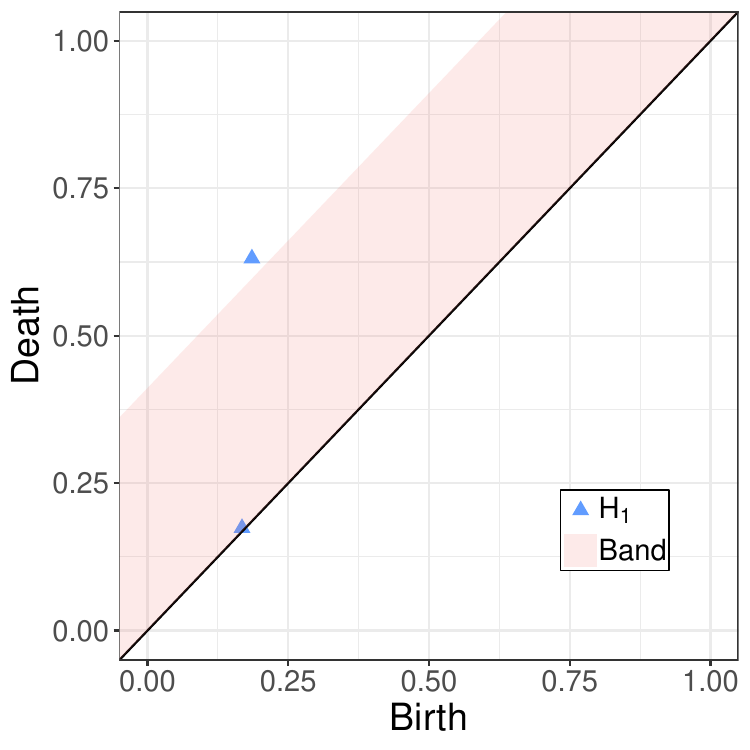}
         \caption{Smoothed P+S[P]}
         \label{fig:ssp_smooth1_h1}
     \end{subfigure}
     \begin{subfigure}[b]{0.325\textwidth}
         \centering
         \includegraphics[width=\textwidth]{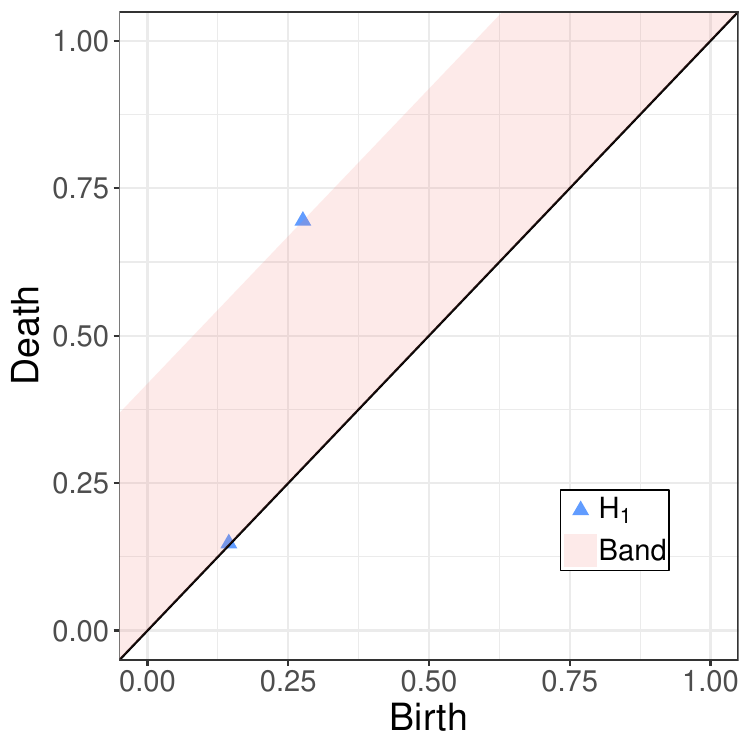}
         \caption{Smoothed P+S[S]}
         \label{fig:ssp_smooth2_h1}
     \end{subfigure}
    \caption{Persistence diagrams (only $H_1$ features) for the Planet (a), Spot (b), and the combined Planet+Spot embeddings and their smoothed versions (c-f) with $95\%$ rejection bands..}
    \label{fig:dgm-signals-planetspot}
\end{figure}
In summary, MaxTDA enhances the $H_1$ feature persistence in the Planet+Spot embedding. This approach is particularly useful for analyzing time series signals with missing observations \citep{dakurah2024subsequence}, embeddings with varying sampling density, or noisy time series where distinguishing or removing noise is impractical or undesirable in the time domain.


\section{Discussion and Conclusion}
\label{sec:conclusion}
This work introduces the MaxTDA methodology that combines kernel smoothing and level-set estimation via rejection sampling to facilitate robust statistical inference for the maximal persistence features in a topological space. The code for implementing the MaxTDA method and replicating the results in this paper is publicly available at \href{https://github.com/Mcpaeis/MaxTDA}{https://github.com/mcpaeis/MaxTDA}. Thresholding the KDEs at a suitable level creates a smooth and dense sampling surface. Rejection sampling is then used to obtain samples that result in improved robustness of estimated homology features with limited reduction in the lifetimes for the maximally persistent feature(s). The maximal persistence estimator is shown to be consistent, and achieves a reduction in bias relative to existing robust TDA methods. The statistical significance of the maximal persistence estimator was assessed via the construction of confidence sets. Several numerical experiments were conducted to illustrate the effectiveness of MaxTDA in uncovering, validating, and drawing meaningful statistical inference for the maximal persistence features of datasets.

There are several important directions for future work and potential improvements. The proposed rejection sampling technique, while effective in low-dimensional settings and geometries that are relatively well-behaved, may face difficulties in ensuring that the sampled points adequately cover the features of interest in complex and high-dimensional data spaces. Dimension reduction methods, such as PCA (e.g., Section~\ref{subsec:time-delay-embedding}) or manifold learning techniques, could be applied as a preprocessing step to improve the effectiveness of the sampling scheme. Alternatively, more adaptive or data-driven sampling strategies could be explored, for example, using importance sampling or Markov chain Monte Carlo approaches that target the most relevant regions of the data space. These adjustments may lead to improved coverage of salient features in higher dimensions, and better stability and efficiency in empirical implementations.

In the exoplanet application in Section~\ref{sec:exoplanet}, a method is proposed to study the contributions of the  planetary signal to the combined signal that includes stellar variability due to a spot.  While this illustration highlights the scientific challenge of detecting low-mass exoplanets in the presence of stellar activity, real RV data can include multiple planets, multiple time-evolving spots, highly irregular time sampling, instrumental effects, and other complexities.
The proposed approach should be considered a preliminary proof of concept requiring further validation across diverse signal scenarios, and serves as an interesting area of future research on topological signal decomposition.

\appendix

\bibliographystyle{apalike} 
\bibliography{references}

\begin{thebibliography}{}

\bibitem[Anai et~al., 2020]{anai2020dtm}
Anai, H., Chazal, F., Glisse, M., Ike, Y., Inakoshi, H., Tinarrage, R., and Umeda, Y. (2020).
\newblock Dtm-based filtrations.
\newblock In {\em Topological Data Analysis: The Abel Symposium 2018}. Springer.

\bibitem[Bobrowski et~al., 2017]{bobrowski2017topological}
Bobrowski, O., Mukherjee, S., and Taylor, J.~E. (2017).
\newblock Topological consistency via kernel estimation.
\newblock {\em Bernoulli}, 23(1):288 -- 328.

\bibitem[Cao, 1997]{cao1997practical}
Cao, L. (1997).
\newblock Practical method for determining the minimum embedding dimension of a scalar time series.
\newblock {\em Physica D: Nonlinear Phenomena}, 110(1-2):43--50.

\bibitem[Chazal et~al., 2011]{chazal2011geometric}
Chazal, F., Cohen-Steiner, D., and M{\'e}rigot, Q. (2011).
\newblock Geometric inference for probability measures.
\newblock {\em Foundations of Computational Mathematics}, 11:733--751.

\bibitem[Chazal et~al., 2016]{chazal2016structure}
Chazal, F., De~Silva, V., Glisse, M., and Oudot, S. (2016).
\newblock {\em The structure and stability of persistence modules}, volume~10.
\newblock Springer.

\bibitem[Chazal and Michel, 2021]{chazal2021introduction}
Chazal, F. and Michel, B. (2021).
\newblock An introduction to topological data analysis: fundamental and practical aspects for data scientists.
\newblock {\em Frontiers in artificial intelligence}, 4:108.

\bibitem[Cohen-Steiner et~al., 2005]{cohen2005stability}
Cohen-Steiner, D., Edelsbrunner, H., and Harer, J. (2005).
\newblock Stability of persistence diagrams.
\newblock In {\em Proceedings of the twenty-first annual symposium on Computational geometry}, pages 263--271.

\bibitem[Cuevas and Fraiman, 1997]{cuevas1997plug}
Cuevas, A. and Fraiman, R. (1997).
\newblock A plug-in approach to support estimation.
\newblock {\em The Annals of Statistics}, pages 2300--2312.

\bibitem[Dakurah and Cisewski-Kehe, 2024]{dakurah2024subsequence}
Dakurah, S. and Cisewski-Kehe, J. (2024).
\newblock A subsequence approach to topological data analysis for irregularly-spaced time series.
\newblock {\em arXiv preprint arXiv:2410.13723}.

\bibitem[Davis et~al., 2017]{davis2017insights}
Davis, A.~B., Cisewski, J., Dumusque, X., Fischer, D.~A., and Ford, E.~B. (2017).
\newblock Insights on the spectral signatures of stellar activity and planets from {PCA}.
\newblock {\em The Astrophysical Journal}, 846(1):59.

\bibitem[Devroye, 1986]{devroye1986non}
Devroye, L. (1986).
\newblock {\em Non-Uniform Random Variate Generation}.
\newblock Springer-Verlag, New York.

\bibitem[Devroye and Lugosi, 2001]{devroye2001combinatorial}
Devroye, L. and Lugosi, G. (2001).
\newblock {\em Combinatorial methods in density estimation}.
\newblock Springer Science \& Business Media.

\bibitem[Dumusque, 2016]{dumusque2016radial}
Dumusque, X. (2016).
\newblock Radial velocity fitting challenge-{I. S}imulating the data set including realistic stellar radial-velocity signals.
\newblock {\em Astronomy \& Astrophysics}, 593:A5.

\bibitem[Dumusque, 2018]{dumusque2018measuring}
Dumusque, X. (2018).
\newblock Measuring precise radial velocities on individual spectral lines-{I. V}alidation of the method and application to mitigate stellar activity.
\newblock {\em Astronomy \& Astrophysics}, 620:A47.

\bibitem[Dumusque et~al., 2014]{dumusque2014soap}
Dumusque, X., Boisse, I., and Santos, N. (2014).
\newblock {SOAP} 2.0: a tool to estimate the photometric and radial velocity variations induced by stellar spots and plages.
\newblock {\em The Astrophysical Journal}, 796(2):132.

\bibitem[Edelsbrunner et~al., 2008]{edelsbrunner2008persistent}
Edelsbrunner, H., Harer, J., et~al. (2008).
\newblock Persistent homology-a survey.
\newblock {\em Contemporary mathematics}, 453(26):257--282.

\bibitem[Edelsbrunner and Harer, 2022]{edelsbrunner2022computational}
Edelsbrunner, H. and Harer, J.~L. (2022).
\newblock {\em Computational topology: an introduction}.
\newblock American Mathematical Society.

\bibitem[Edelsbrunner et~al., 2000]{edelsbrunner2000topological}
Edelsbrunner, H., Letscher, D., and Zomorodian, A. (2000).
\newblock Topological persistence and simplification.
\newblock In {\em Proceedings 41st annual symposium on foundations of computer science}. IEEE.

\bibitem[Fasy et~al., 2018]{fasy2018robust}
Fasy, B., Lecci, F., Wasserman, L., et~al. (2018).
\newblock Robust topological inference: Distance to a measure and kernel distance.
\newblock {\em Journal of Machine Learning Research}, 18(159):1--40.

\bibitem[Fasy et~al., 2014]{fasy2014confidence}
Fasy, B.~T., Lecci, F., Rinaldo, A., Wasserman, L., Balakrishnan, S., and Singh, A. (2014).
\newblock Confidence sets for persistence diagrams.
\newblock {\em The Annals of Statistics}, 42:2301--2339.

\bibitem[Federer, 1959]{federer1959curvature}
Federer, H. (1959).
\newblock Curvature measures.
\newblock {\em Transactions of the American Mathematical Society}, 93(3):418--491.

\bibitem[Fraser and Swinney, 1986]{fraser1986independent}
Fraser, A.~M. and Swinney, H.~L. (1986).
\newblock Independent coordinates for strange attractors from mutual information.
\newblock {\em Physical review A}, 33(2):1134.

\bibitem[Glenn et~al., 2024]{glenn2024confidence}
Glenn, S., Cisewski-Kehe, J., Zhu, J., and Bement, W.~M. (2024).
\newblock Confidence regions for a persistence diagram of a single image with one or more loops.
\newblock {\em arXiv preprint arXiv:2405.01651}.

\bibitem[Hatcher et~al., 2002]{hatcher2002algebraic}
Hatcher, A., Press, C.~U., and of~Mathematics, C. U.~D. (2002).
\newblock {\em Algebraic Topology}.
\newblock Algebraic Topology. Cambridge University Press.

\bibitem[Holzer et~al., 2021a]{holzer2021hermite}
Holzer, P.~H., Cisewski-Kehe, J., Fischer, D., and Zhao, L. (2021a).
\newblock A {Hermite-Gaussian} based exoplanet radial velocity estimation method.
\newblock {\em The Annals of Applied Statistics}, 15(2):527--555.

\bibitem[Holzer et~al., 2021b]{holzer2021stellar}
Holzer, P.~H., Cisewski-Kehe, J., Zhao, L., Ford, E.~B., Gilbertson, C., and Fischer, D.~A. (2021b).
\newblock A stellar activity {F}-statistic for exoplanet surveys ({SAFE}).
\newblock {\em The Astronomical Journal}, 161(6):272.

\bibitem[Hu{\'e}lamo et~al., 2008]{huelamo2008tw}
Hu{\'e}lamo, N., Figueira, P., Bonfils, X., Santos, N., Pepe, F., Gillon, M., Azevedo, R., Barman, T., Fern{\'a}ndez, M., Di~Folco, E., et~al. (2008).
\newblock {TW Hydrae}: evidence of stellar spots instead of a {Hot Jupiter}.
\newblock {\em Astronomy \& Astrophysics}, 489(2):L9--L13.

\bibitem[Jones et~al., 2022]{jones2022improving}
Jones, D.~E., Stenning, D.~C., Ford, E.~B., Wolpert, R.~L., Loredo, T.~J., Gilbertson, C., and Dumusque, X. (2022).
\newblock Improving exoplanet detection power: Multivariate {G}aussian process models for stellar activity.
\newblock {\em The Annals of Applied Statistics}, 16(2):652--679.

\bibitem[Mayor and Queloz, 1995]{mayor1995jupiter}
Mayor, M. and Queloz, D. (1995).
\newblock A {J}upiter-mass companion to a solar-type star.
\newblock {\em Nature}, 378(6555):355--359.

\bibitem[Niyogi et~al., 2008]{niyogi2008finding}
Niyogi, P., Smale, S., and Weinberger, S. (2008).
\newblock Finding the homology of submanifolds with high confidence from random samples.
\newblock {\em Discrete \& Computational Geometry}, 39:419--441.

\bibitem[Perea et~al., 2015]{perea2015sw1pers}
Perea, J.~A., Deckard, A., Haase, S.~B., and Harer, J. (2015).
\newblock Sw1pers: Sliding windows and 1-persistence scoring; discovering periodicity in gene expression time series data.
\newblock {\em BMC bioinformatics}, 16(1):1--12.

\bibitem[Pranav et~al., 2017]{pranav2017topology}
Pranav, P., Edelsbrunner, H., Van~de Weygaert, R., Vegter, G., Kerber, M., Jones, B.~J., and Wintraecken, M. (2017).
\newblock The topology of the cosmic web in terms of persistent betti numbers.
\newblock {\em Monthly Notices of the Royal Astronomical Society}, 465(4):4281--4310.

\bibitem[Rajpaul et~al., 2015]{rajpaul2015gaussian}
Rajpaul, V., Aigrain, S., Osborne, M.~A., Reece, S., and Roberts, S. (2015).
\newblock A {G}aussian process framework for modelling stellar activity signals in radial velocity data.
\newblock {\em Monthly Notices of the Royal Astronomical Society}, 452(3):2269--2291.

\bibitem[Robins et~al., 2011]{robins2011theory}
Robins, V., Wood, P.~J., and Sheppard, A.~P. (2011).
\newblock Theory and algorithms for constructing discrete morse complexes from grayscale digital images.
\newblock {\em IEEE Transactions on pattern analysis and machine intelligence}, 33(8):1646--1658.

\bibitem[Sauer et~al., 1991]{sauer1991embedology}
Sauer, T., Yorke, J.~A., and Casdagli, M. (1991).
\newblock Embedology.
\newblock {\em Journal of statistical Physics}, 65:579--616.

\bibitem[Singh et~al., 2009]{singh2009adaptive}
Singh, A., Scott, C., and Nowak, R. (2009).
\newblock Adaptive hausdorff estimation of density level sets.
\newblock {\em The Annals of Statistics}.

\bibitem[Takens, 2006]{takens2006detecting}
Takens, F. (2006).
\newblock Detecting strange attractors in turbulence.
\newblock In {\em Dynamical Systems and Turbulence, Warwick 1980: proceedings of a symposium held at the University of Warwick 1979/80}, pages 366--381. Springer.

\bibitem[Tsybakov, 1997]{tsybakov1997nonparametric}
Tsybakov, A.~B. (1997).
\newblock On nonparametric estimation of density level sets.
\newblock {\em The Annals of Statistics}, 25(3):948--969.

\bibitem[Vietoris, 1927]{vietoris1927hoheren}
Vietoris, L. (1927).
\newblock {\"U}ber den h{\"o}heren zusammenhang kompakter r{\"a}ume und eine klasse von zusammenhangstreuen abbildungen.
\newblock {\em Mathematische Annalen}, 97(1):454--472.

\bibitem[Xu et~al., 2019]{xu2019finding}
Xu, X., Cisewski-Kehe, J., Green, S.~B., and Nagai, D. (2019).
\newblock Finding cosmic voids and filament loops using topological data analysis.
\newblock {\em Astronomy and Computing}, 27:34--52.

\bibitem[Zhao et~al., 2022]{zhao2022expres}
Zhao, L.~L., Fischer, D.~A., Ford, E.~B., Wise, A., Cretignier, M., Aigrain, S., Barragan, O., Bedell, M., Buchhave, L.~A., Camacho, J.~D., et~al. (2022).
\newblock The expres stellar signals project ii. state of the field in disentangling photospheric velocities.
\newblock {\em The Astronomical Journal}, 163(4):171.

\end{thebibliography}
\end{document}